\renewcommand{\v}{\mathbf{v}}
\newcommand{\cross}{+}
\newcommand{\im}{\mathrm{im}}
\newcommand{\onevec}{\mathbf{1}}
\newcommand{\R}{\mathbb{R}}
\newtheorem{theorem}{Theorem}[section]
\newtheorem{lemma}[theorem]{Lemma}
\newtheorem{proposition}[theorem]{Proposition}
\newtheorem{definition}[theorem]{Definition}
\newtheorem{claim}[theorem]{Claim}
\newtheorem{conjecture}[theorem]{Conjecture}
\def\softO#1{\widetilde{{O}} \left( #1 \right)}
\def\bigO#1{{{O}}\left( #1 \right) }
\renewcommand{\v}{\mathbf{v}}
\newcommand{\lmin}{\lambda_\mathrm{min}}
\newcommand{\lmax}{\lambda_\mathrm{max}}
\newcommand{\deltaL}{\delta_{L}}
\newcommand{\deltaU}{\delta_{U}}
\newcommand{\epsL}{\epsilon_{L}}
\newcommand{\epsU}{\epsilon_{U}}
\newcommand{\tr}{\mathrm{Tr}}
\newcommand{\id}{\mathbf{id}}
\newcommand\ppi{\mathbf{v}}
\def\ceil#1{\left\lceil #1 \right\rceil}
\def\norm#1{\left\| #1 \right\|}
\newdimen\pIR
\newcommand\StevesR{{\rm I\kern\pIR R}}
\def\bvec#1{{\mbox{\boldmath $#1$}}}
\def\defeq{\stackrel{\mathrm{def}}{=}}
\def\setof#1{\left\{#1  \right\}}
\def\sizeof#1{\left|#1  \right|}
\begin{document}

\title{Twice-Ramanujan Sparsifiers
\thanks{
This material is based upon work supported by the National Science Foundation under Grant CCF-0634957.
Any opinions, findings, and conclusions or recommendations expressed in this material are those of the author(s) and do not necessarily reflect the views of the National Science Foundation.
}\\} 
 
\author{
Joshua Batson
\thanks{
Department of Mathematics, MIT. Work on this paper performed while at
Yale College.}\\
\and
Daniel A. Spielman
\thanks{
Program in Applied Mathematics and 
Department of Computer Science,
Yale University.}\\
\and
Nikhil Srivastava
\thanks{
Department of Computer Science,
Yale University.}
}

\maketitle
\begin{abstract}
We prove that every graph has a spectral sparsifier with a number
  of edges linear in its number of vertices.
As linear-sized spectral sparsifiers of complete graphs are expanders,
  our sparsifiers of arbitrary graphs 
  can be viewed as generalizations of
  expander graphs.

In particular, we prove that for every $d > 1$ and every
  undirected, weighted graph $G = (V,E,w)$ on $n$ vertices, there exists
  a weighted graph $H=(V,F,\tilde{w})$ with at most $\ceil{d(n-1)} $ edges
  such that for every $x \in \R^{V}$,
\[
 x^T L_G x \leq 
        x^{T} L_{H} x
    \leq
\left(\frac{d+1+2\sqrt{d}}{d+1-2\sqrt{d}}\right)
      \cdot
      x^{T} L_{G} x
\]
where $L_{G}$ and $L_{H}$ are the Laplacian matrices of $G$
  and $H$, respectively. 
Thus, $H$
  approximates $G$ spectrally at least as well as a Ramanujan expander
  with $dn/2$ edges approximates the complete graph.

We give an elementary deterministic polynomial time algorithm for
  constructing $H$.
\end{abstract}

\section{Introduction}\label{sec:intro}
A sparsifier of a graph $G = (V,E,w)$ is a sparse graph $H$ that is similar to
  $G$ in some useful way.
Many notions of similarity have been considered.
For example, Chew's~\cite{PaulChew} spanners have the property that the distance
  between every pair of vertices in $H$ is approximately the same as in $G$.
Benczur and Karger's~\cite{BenczurKarger} cut-sparsifiers have the property
  that the weight of the boundary of every set of vertices is approximately
  the same in $G$ as in $H$.
We consider the spectral notion of similarity introduced
  by Spielman and Teng~\cite{SpielmanTengPrecon,SpielmanTengSparsifier}:
  we say that $H$ is a $\kappa$-approximation of $G$ if for all
  $x \in \R^{V}$,
\begin{equation}\label{eqn:approximation}
 x^{T} L_G x \leq 
        x^{T} L_{H} x
  \leq
  \kappa\cdot
        x^{T} L_{G} x,
\end{equation}
where $L_{G}$ and $L_{H}$ are the Laplacian matrices of $G$ and $H$.
We recall that
\[
x^{T} L_{G} x = 
\sum_{(u,v) \in E} w_{u,v} (x_{u} - x_{v})^{2},
\]
where $w_{u,v}$ is the weight of edge $(u,v)$ in $G$.
By considering vectors $x$ that are the characteristic vectors of sets,
  one can see that condition~\eqref{eqn:approximation}
  is strictly stronger than the cut
  condition of Benczur and Karger.

In the case where $G$ is the complete graph, excellent spectral
  sparsifiers are supplied
  by \textit{Ramanujan Graphs} \cite{LPS,Margulis}.
These are $d$-regular graphs $H$ 
  all of whose non-zero Laplacian eigenvalues lie between
  $d - 2 \sqrt{d-1}$ and $d + 2 \sqrt{d-1}$.
Thus, if we take a Ramanujan graph on $n$ vertices and
  multiply the weight of every edge by
  $n / (d - 2 \sqrt{d-1})$, we obtain a graph
  that
   $\kappa$-approximates
  the complete graph, for 
\[
  \kappa  = \frac{d+2\sqrt{d-1}}{d-2\sqrt{d-1}}.
\]
  
In this paper, we prove that
  every graph can be approximated at least this 
  well\footnote{
Strictly speaking, our approximation constant is only better than the Ramanujan
  bound $\kappa=\frac{d+2\sqrt{d-1}}{d-2\sqrt{d-1}}$ in the regime $d\ge
  \frac{1+\sqrt{5}}{2}$.
  This includes the actual Ramanujan graphs, for which $d$ is an integer greater
  than $2$.}
     by a graph with only twice as many edges 
  as the Ramanujan graph
  (as a $d$-regular graph has $dn/2$ edges).
\begin{theorem}\label{thm:mainthm}
For every $d>1$, 
  every undirected weighted graph $G=(V,E,w)$ on $n$ vertices
  contains a weighted subgraph $H=(V,F,\tilde{w})$ with $\ceil{d(n-1)}$
  edges (i.e., average degree at most $2d$) that satisfies:
\[
 x^T L_G x \leq 
        x^{T} L_{H} x
    \leq
\left(\frac{d+1+2\sqrt{d}}{d+1-2\sqrt{d}}\right)
      \cdot
      x^{T} L_{G} x
      \qquad\forall x\in\R^V.
\]
\end{theorem}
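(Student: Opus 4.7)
The plan is to reduce the theorem to a matrix-sparsification problem in isotropic position and then solve that problem by an iterative rank-one update procedure driven by two ``barrier'' potentials.

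For the reduction, I would factor $L_G = \sum_{e=(u,v) \in E} w_e (\chi_u - \chi_v)(\chi_u - \chi_v)^T$ and change coordinates by the square root of the Moore--Penrose pseudoinverse of $L_G$: setting $v_e \defeq \sqrt{w_e}\, L_G^{\dagger/2}(\chi_u - \chi_v)$ yields $\sum_e v_e v_e^T = \Pi$, the orthogonal projection onto $\im(L_G)$, an $N$-dimensional subspace with $N = n-1$ (assuming $G$ connected; otherwise handle each component). Producing an $H$ satisfying the theorem is then equivalent to selecting nonnegative weights $s_e$, at most $\lceil dN \rceil$ of them nonzero, such that $\Pi \pleq \sum_e s_e v_e v_e^T \pleq \kappa\, \Pi$ with $\kappa = (d+1+2\sqrt d)/(d+1-2\sqrt d)$. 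It therefore suffices to prove the following ``isotropic'' statement: for any vectors $v_1, \dots, v_m$ in an $N$-dimensional space with $\sum_i v_i v_i^T = I$, there exist nonnegative $s_i$, at most $\lceil dN \rceil$ nonzero, so that the condition number of $\sum_i s_i v_i v_i^T$ is at most $\kappa$.

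I would build that weighted sum one rank-one term at a time, controlled by the upper and lower barrier potentials
\[
\Phi^u(A) \defeq \tr\bigl((uI - A)^{-1}\bigr), \qquad \Phi_\ell(A) \defeq \tr\bigl((A - \ell I)^{-1}\bigr),
\]
which are finite exactly when the spectrum of $A$ lies strictly in $(\ell, u)$ and blow up as any eigenvalue approaches a barrier. Start from $A_0 = 0$ with initial barriers $\ell_0 < 0 < u_0$ chosen so that $\Phi^{u_0}(0) \leq \epsU$ and $\Phi_{\ell_0}(0) \leq \epsL$ for small constants $\epsU, \epsL$. At each step pick an index $j_k$ and weight $t_k > 0$, set $A_k = A_{k-1} + t_k v_{j_k} v_{j_k}^T$, and simultaneously shift the barriers $u \mapsto u + \deltaU$ and $\ell \mapsto \ell + \deltaL$ while preserving the invariants $\Phi^u(A_k) \leq \epsU$ and $\Phi_\ell(A_k) \leq \epsL$. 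Since the barriers always strictly enclose the spectrum, after $T = \lceil dN \rceil$ steps the condition number of $A_T$ is at most $u_T/\ell_T$, and choosing $\deltaU < \deltaL$ appropriately makes this ratio equal $\kappa$.

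The crux is a step lemma: whenever both invariants hold, there exist $j$ and $t > 0$ whose rank-one update maintains them after the barriers shift. Using the Sherman--Morrison formula, each updated potential can be written as an explicit rational function in the scalars $v^T(uI - A)^{-1} v$ and $v^T(uI - A)^{-2} v$ (and the $\ell$-analogues). Defining an ``upper feasibility'' threshold $U_A(v)$ and a ``lower feasibility'' threshold $L_A(v)$ so that adding $t v v^T$ respects the two shifted invariants iff $U_A(v) \leq 1/t \leq L_A(v)$, I would average these quantities over the given $v_j$: the isotropy $\sum_j v_j v_j^T = I$ turns the sums into traces of matrix functions of $A$, yielding $\sum_j U_A(v_j) \leq \sum_j L_A(v_j)$ under appropriate choices of $\epsU, \epsL, \deltaU, \deltaL$. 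By pigeonhole some $j$ then satisfies $U_A(v_j) \leq L_A(v_j)$, furnishing an admissible update. The main obstacle is tuning the four parameters so that (i) the averaging inequality holds at every step and (ii) after $\lceil dN \rceil$ iterations the ratio $u_T/\ell_T$ is exactly $(d+1+2\sqrt d)/(d+1-2\sqrt d)$; once these are pinned down, positivity of the weights, preservation of the potentials, and reversing the change of variables to recover $H$ are all routine.
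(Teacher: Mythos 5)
Your proposal is correct and follows essentially the same route as the paper: the same pseudoinverse change of variables reducing to vectors in isotropic position, the same two trace-of-resolvent barrier potentials, the same Sherman--Morrison analysis yielding thresholds $U_A(v)$ and $L_A(v)$, and the same averaging-over-isotropy argument to find an admissible rank-one update at each step. The only slip is the parenthetical ``choosing $\deltaU < \deltaL$'': the feasibility condition $\frac{1}{\deltaU}+\epsU \le \frac{1}{\deltaL}-\epsL$ forces $\deltaU > \deltaL$ (the paper takes $\deltaL = 1$, $\deltaU = \frac{\sqrt{d}+1}{\sqrt{d}-1}$), and the ratio still comes out to $\kappa$ because the barriers start at $l_0 < 0 < u_0$.
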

Our proof provides
  a deterministic greedy algorithm for computing the graph $H$ in 
  time   $O (d n^{3} m)$.

We remark that while the edges of $H$ are a subset of the edges of $G$,
  the weights of edges in $H$ and $G$ will typically be different.
In fact, there exist unweighted graphs $G$ for which every good spectral 
  sparsifier $H$ must contain edges of 
  widely varying weights~\cite{SpielmanTengSparsifier}.

\subsection{Expanders: Sparsifiers of the Complete Graph}
In the case that $G$ is a complete graph, 
  our construction produces expanders.
However, these expanders are slightly unusual in that their edges have weights,
  they may be irregular, and the weighted degrees of vertices can vary slightly.
This may lead one to ask whether they should really be considered expanders.
In Section~\ref{sec:expanders} we argue that they should be.

As the graphs we produce are irregular and weighted, it is also not immediately
  clear that we should be comparing $\kappa$ with the Ramanujan bound of
\begin{equation}\label{eqn:ramanujanBound}
\frac{
        d + 2 \sqrt{d-1}
}{
        d - 2 \sqrt{d-1}
}
=
1+\frac{4}{\sqrt{d}}+O(1/d).
\end{equation}
It is known%
\footnote{While lower bounds on the spectral gap of $d$-regular graphs
  focus on showing that the second-smallest eigenvalue is asymptotically
  at most
  $d - 2 \sqrt{d-1}$, the same proofs by test functions
  can be used to show that the
  largest eigenvalue is at asymptotically least $d + 2 \sqrt{d-1}$.}
 that no $d$-regular graph of uniform weight
  can $\kappa$-approximate a complete graph for $\kappa$ 
  asymptotically better than 
  \eqref{eqn:ramanujanBound}~\cite{Nilli}.
While we believe that no graph of {\em average} degree $d$ can be a
  $\kappa$-approximation of a complete graph for $\kappa$ asymptotically
  better than \eqref{eqn:ramanujanBound}, we are unable to show this at the
  moment and prove
  instead the weaker claim that no such graph can achieve $\kappa$ 
  less than
\[
1 + 
\frac{2}{\sqrt{d}}
- \bigO{\frac{\sqrt{d}}{n}}.
\]

\subsection{Prior Work}\label{sec:prior}
Spielman and Teng~\cite{SpielmanTengPrecon,SpielmanTengSparsifier} 
  introduced the notion of
  sparsification that we consider, and proved that $(1+\epsilon )$-approxim\-ations
  with $\softO{n / \epsilon^{2}}$ edges could be constructed in $\softO{m}$ time.
They used these sparsifiers to obtain a nearly-linear time algorithm
  for solving diagonally dominant systems of linear 
  equations~\cite{SpielmanTengPrecon,SpielmanTengLinsolve}.

Spielman and Teng were inspired by the notion of sparsification introduced by Benczur
  and Karger~\cite{BenczurKarger} for cut problems, which only
  required 
  inequality \eqref{eqn:approximation} to hold for all $x \in \setof{0,1}^{V}$.
Benczur and Karger showed how to construct graphs $H$ meeting this guarantee
  with  $\bigO{n \log n / \epsilon ^{2}}$ edges in $\bigO{m \log^{3} n}$ time;
 their cut sparsifiers have been used to obtain faster algorithms
  for cut problems~\cite{BenczurKarger,krv}.

Spielman and Srivastava~\cite{SpielmanSrivastava} proved the
  existence of spectral sparsifiers with $\bigO {n \log n / \epsilon ^{2}}$
  edges, and showed how to construct them in $\softO{m}$ time.
They conjectured that it should be possible to find such sparsifiers with only
  $\bigO{n / \epsilon ^{2}}$ edges.
We affirmatively resolve this conjecture.

Recently, partial progress was made towards this conjecture
  by Goyal, Rademacher and Vempala~\cite{GoyalRademacherVempala},
  who showed how to find graphs $H$ with only $2n$ edges that 
  $\bigO {\log n}$-approximate bounded degree graphs
  $G$ under the cut notion of Benczur and Karger.

We remark that all of these constructions were randomized.
Ours is the first deterministic algorithm to achieve the guarantees of any
  of these papers.

\section{Preliminaries}
\subsection{The Incidence Matrix and the Laplacian} \label{sec:incidence}
Let $G=(V,E,w)$ be a connected weighted undirected graph with $n$ vertices and $m$
edges and edge weights $w_e > 0$. If we orient the edges of $G$
arbitrarily, we can write its Laplacian as $L=B^{T} WB$, where $B_{m\times n}$ is the
{\em signed edge-vertex incidence matrix}, given by
\[ B(e,v)= \left\{\begin{array}{ll} 1 &\textrm{if $v$ is $e$'s head}\\ -1 &\textrm{if
$v$ is $e$'s tail}\\ 0 & \textrm{otherwise}\end{array}\right.\]
and $W_{m\times m}$ is the diagonal matrix with $W(e,e)=w_e$. 
It is immediate that $L$ is positive semidefinite since:
\begin{align*}
 x^{T} Lx &=
  x^{T} B^{T} WBx =
 \|W^{1/2}Bx\|_2^2
 \\&=\sum_{(u,v) \in E} w_{u,v} (x_{u} - x_{v})^{2}
 \ge 0,
\quad\textrm{ for every $x\in\R^n$.}\end{align*}
\noindent and that $G$ is connected if and only if
 $\ker(L)=\ker(W^{1/2}B)=\textrm{span}(\onevec)$.

\subsection{The Pseudoinverse}\label{sec:pseudo}
Since $L$ is symmetric we can diagonalize it and write
\[ L=\sum_{i=1}^{n-1}\lambda_i u_iu_i^{T} \]
where $\lambda_1,\ldots,\lambda_{n-1}$ are the nonzero eigenvalues of $L$ and
$u_1,\ldots,u_{n-1}$ are a corresponding set of orthonormal eigenvectors. The {\em
Moore-Penrose Pseudoinverse} of $L$ is then defined as
\[ L^\cross=\sum_{i=1}^{n-1}\frac{1}{\lambda_i} u_iu_i^{T} .\]
Notice that $\ker(L)=\ker(L^\cross)$ and that
\[ LL^\cross=L^\cross L=\sum_{i=1}^{n-1}u_iu_i^{T} ,\]
which is simply the projection onto the span of the
nonzero eigenvectors of $L$ (which are also the eigenvectors of $L^\cross$). Thus, $LL^\cross=L^\cross L$ is the identity on
$\im(L)=\ker(L)^\perp$.

\subsection{Formulas for Rank-one Updates}
We use the following well-known theorem from linear algebra, which describes
the behavior of the inverse of a matrix under rank-one updates 
  (see~\cite[Section 2.1.3]{GolubVanLoan}).
\begin{lemma}[Sherman-Morrison Formula] If $A$ is a nonsingular $n\times n$ matrix and $\ppi$ is a vector, then 
\[(A+\ppi\ppi^{T} )^{-1}= A^{-1}-\frac{A^{-1}\ppi\ppi^{T} A^{-1}}{1+\ppi^{T} A^{-1}\ppi}.\]
\end{lemma}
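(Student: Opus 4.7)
The plan is to verify the identity by direct multiplication: define the right-hand side
\[ M := A^{-1} - \frac{A^{-1}\ppi\ppi^{T} A^{-1}}{1 + \ppi^{T} A^{-1}\ppi}, \]
and show that $(A + \ppi\ppi^{T}) M = I$. Because $A + \ppi\ppi^{T}$ is square, this one-sided check is enough to identify $M$ as its inverse, reducing the lemma to a short scalar computation.

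First I would introduce the abbreviation $\alpha := 1 + \ppi^{T} A^{-1}\ppi$; the formula is only well-defined when $\alpha \neq 0$, which is the implicit hypothesis that the rank-one update does not destroy invertibility (and is automatic in the setting of this paper, since $A$ will always be symmetric positive definite and hence $\ppi^{T} A^{-1}\ppi \geq 0$). Then I would expand $(A+\ppi\ppi^{T})M$ into four terms. The term $AA^{-1}$ yields $I$; the other three all carry the common right factor $\ppi\ppi^{T} A^{-1}$, and after pulling out the scalar $\ppi^{T} A^{-1}\ppi = \alpha - 1$ the coefficient in front of that factor is
\[ 1 - \frac{1}{\alpha} - \frac{\alpha - 1}{\alpha} = 0, \]
so the cross-terms cancel exactly and the product collapses to $I$.

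The main obstacle, if it can even be called one, is purely bookkeeping: one must recognize that $\ppi^{T} A^{-1}\ppi$ is a scalar and therefore commutes with the surrounding matrix products, so that the quadruple expression $\ppi\ppi^{T} A^{-1}\ppi\ppi^{T} A^{-1}$ factors as $(\ppi^{T} A^{-1}\ppi)\,\ppi\ppi^{T} A^{-1}$. Once this observation is made, collecting all three non-identity terms under the common right factor $\ppi\ppi^{T} A^{-1}$ reduces the proof to the single scalar identity displayed above, and no further machinery is required.
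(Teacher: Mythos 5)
Your verification is correct: expanding $(A+\ppi\ppi^{T})M$, using that $\ppi^{T}A^{-1}\ppi=\alpha-1$ is a scalar, and collecting the three non-identity terms under the common factor $\ppi\ppi^{T}A^{-1}$ with coefficient $1-\tfrac{1}{\alpha}-\tfrac{\alpha-1}{\alpha}=0$ does yield $I$, and for square matrices a one-sided inverse suffices. The paper itself gives no proof of this lemma --- it is quoted as a standard fact with a citation to Golub and Van Loan --- so there is no authorial argument to compare against; your direct-multiplication check is the standard textbook derivation, and you are right to flag the implicit hypothesis $1+\ppi^{T}A^{-1}\ppi\neq 0$ (without which $A+\ppi\ppi^{T}$ can be singular). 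One small caveat on your parenthetical: in this paper the formula is actually invoked with \emph{negative} rank-one updates $-t\ppi\ppi^{T}$ (in Lemma 3.3 the denominator is $1-t\ppi^{T}(u'I-A)^{-1}\ppi$), so nonvanishing of the denominator is not automatic from positive definiteness there; the authors secure it separately via the condition $1/t\ge U_A(\ppi)$. That does not affect the correctness of your proof of the lemma as stated.
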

There is a related formula describing the change in the {\em determinant} of
  a matrix under the same update:
\begin{lemma}[Matrix Determinant Lemma] \label{lem:matrixdet} If $A$ is nonsingular and $\ppi$ is a vector, then
\[ \det(A+\ppi\ppi^T) = \det(A)(1+\ppi^T A^{-1} \ppi).\]\end{lemma}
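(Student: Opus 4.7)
The plan is to reduce the problem to computing the determinant of an identity plus a rank-one matrix. Since $A$ is nonsingular, factor
\[ A + \ppi\ppi^T = A\bigl(I + A^{-1}\ppi\ppi^T\bigr), \]
and apply multiplicativity of the determinant to obtain
\[ \det(A + \ppi\ppi^T) = \det(A)\cdot\det\bigl(I + A^{-1}\ppi\ppi^T\bigr). \]
It therefore suffices to prove the rank-one identity $\det(I + uv^T) = 1 + v^T u$ for arbitrary $u,v\in\R^n$, and then specialize to $u = A^{-1}\ppi$, $v = \ppi$.

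For the rank-one identity, I would argue via the spectrum of $M := uv^T$. Since $M$ has rank at most one, at least $n-1$ of its eigenvalues (counted with algebraic multiplicity) are zero, so its characteristic polynomial has the form $\lambda^{n-1}(\lambda - c)$; matching the trace forces $c = \tr(M) = v^T u$. Consequently $I + M$ has eigenvalues $1+v^T u$ once and $1$ with multiplicity $n-1$, yielding $\det(I + uv^T) = 1 + v^T u$.

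The only subtlety is that when $v^T u = 0$ but $u,v\neq 0$, the matrix $uv^T$ is a nontrivial nilpotent and not diagonalizable. This does not affect the argument, since it proceeds at the level of the characteristic polynomial (which is insensitive to Jordan structure): the rank-one constraint still forces $n-1$ algebraically-zero eigenvalues, and the trace identity still pins down the remaining one. Alternatively, one can observe that both sides of $\det(I + uv^T) = 1 + v^T u$ are polynomials in the entries of $u$ and $v$, and the identity holds on the dense set where $v^T u \neq 0$ (where diagonalizability is automatic), so it must hold everywhere. Combining with the factorization gives $\det(A + \ppi\ppi^T) = \det(A)(1 + \ppi^T A^{-1}\ppi)$, as claimed.
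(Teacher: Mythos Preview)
Your argument is correct: the factorization $A+\ppi\ppi^T = A(I+A^{-1}\ppi\ppi^T)$ together with the rank-one identity $\det(I+uv^T)=1+v^Tu$ (established via the characteristic polynomial of $uv^T$, using that rank $\le 1$ forces algebraic multiplicity $\ge n-1$ for the zero eigenvalue and the trace pins down the last root) gives the result cleanly, and your handling of the nilpotent case is sound.

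There is nothing to compare against: the paper states this lemma as a standard fact without proof, so your proposal simply supplies what the paper omits.
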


\section{The Main Result}
At the heart of this work is the following purely linear algebraic theorem.
We use the notation $A\preceq B$ to mean that $B-A$ is positive
  semidefinite, and $\id_{S}$ to denote the identity operator on a vector space $S$.
\begin{theorem}\label{thm:linalg}
Suppose $d>1$ and $\v_1,\v_2,\ldots,\v_m$ are vectors in $\R^n$ with
\[
  \sum_{i\le m}\v_i\v_i^T=\id_{\R^n}.
\]
Then there exist scalars $s_i\ge 0$ with $|\{i:s_i\neq 0\}|\le dn$ so that
\[
  \id_{\R^n} 
  \preceq \sum_{i\le m} s_i\v_i\v_i^T
  \preceq \left(\frac{d+1+2\sqrt{d}}{d+1-2\sqrt{d}}\right) \id_{\R^n}.
\]
\end{theorem}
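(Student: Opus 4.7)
The plan is to use the barrier function method, building
$\sum_i s_i\v_i\v_i^T$ incrementally over $dn$ rounds.
At each round I add a single scaled rank-one term $t\cdot\v_i\v_i^T$
to the running matrix $A$ while simultaneously shifting a lower
barrier $l$ and an upper barrier $u$ so that all eigenvalues of $A$
remain strictly between them. After $dn$ rounds, the ratio $u/l$ bounds
the condition number of the output, and a single positive rescaling
produces the two-sided bound $\id_{\R^n}\preceq A\preceq \kappa\,\id_{\R^n}$
claimed by the theorem, with $\kappa=(d+1+2\sqrt{d})/(d+1-2\sqrt{d})$.

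I control the spectrum through two potential functions,
\[
\Phi^u(A)=\tr\bigl((uI-A)^{-1}\bigr)
\quad\text{and}\quad
\Phi_l(A)=\tr\bigl((A-lI)^{-1}\bigr),
\]
both well defined whenever $l<\lambda_{\min}(A)\le\lambda_{\max}(A)<u$.
Each potential measures how close the spectrum of $A$ is to
crossing a barrier: if $\Phi^u(A)<1/\delta$ then no eigenvalue of $A$
lies within $\delta$ of $u$, and similarly for $\Phi_l$.
The inductive invariant I maintain throughout the construction is
$\Phi^u(A)\le\epsU$ and $\Phi_l(A)\le\epsL$ for fixed constants
to be chosen later.

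The heart of the argument is the inductive step. Given $A$ satisfying
the invariant at barriers $l<u$, I shift the barriers to $l+\deltaL$
and $u+\deltaU$ and seek $(i,t)$ so that
$A'=A+t\,\v_i\v_i^T$ still satisfies the invariant at the new
barriers. Applying the Sherman--Morrison formula to
$(uI-A)^{-1}$ and $(A-lI)^{-1}$ yields closed-form expressions
for $\Phi^{u+\deltaU}(A')$ and $\Phi_{l+\deltaL}(A')$ as explicit rational
functions of $t$ and of certain quadratic forms of $\v_i$ against
powers of the resolvents. Imposing the invariant at $A'$ then
translates into an upper-barrier condition of the form
$t\le 1/U_A(\v_i)$ and a lower-barrier condition
$t\ge 1/L_A(\v_i)$ for explicit functionals $U_A,L_A$ on vectors.
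Hence a valid pair $(i,t)$ exists as soon as some index $i$ satisfies
$U_A(\v_i)\le L_A(\v_i)$.

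The main obstacle, and the step that pins down every parameter,
is showing that such an $i$ always exists. For this I invoke the
hypothesis $\sum_i\v_i\v_i^T=\id_{\R^n}$, which converts the totals
$\sum_i U_A(\v_i)$ and $\sum_i L_A(\v_i)$ into traces of explicit
matrix expressions in $(uI-A)^{-1}$, $(A-lI)^{-1}$, $\Phi^u(A)$,
and $\Phi_l(A)$. Choosing $(\epsU,\epsL,\deltaU,\deltaL)$ so that
the total upper obstruction is strictly dominated by the total lower
slack then guarantees a good index by pigeonhole. Initializing at
$A_0=0$ with barriers $u_0=n/\epsU$ and $l_0=-n/\epsL$ (so the
invariant holds at step zero), after $dn$ rounds the barriers become
$u_0+dn\,\deltaU$ and $l_0+dn\,\deltaL$. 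Optimizing the admissible
parameters produces the sharp ratio
\[
\frac{u_0+dn\,\deltaU}{l_0+dn\,\deltaL}
=\frac{(\sqrt{d}+1)^2}{(\sqrt{d}-1)^2}
=\frac{d+1+2\sqrt{d}}{d+1-2\sqrt{d}},
\]
completing the proof.
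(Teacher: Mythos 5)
Your proposal follows essentially the same route as the paper: the same two barrier potentials, the same Sherman--Morrison-derived thresholds $U_A(\v_i)$ and $L_A(\v_i)$, the same trace/averaging argument using $\sum_i \v_i\v_i^T = I$ to find a good index, and the same initialization $u_0=n/\epsU$, $l_0=-n/\epsL$ with $dn$ steps. The only caveat is that the step you summarize as ``the total lower slack dominates'' hides the one genuinely delicate computation (bounding $\sum_i L_A(\v_i)$ from below requires a Cauchy--Schwarz argument, Claim~\ref{clm:lowercauchy} in the paper), but your plan is otherwise a faithful outline of the paper's proof.
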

\noindent The sparsification result for graphs follows quickly from this theorem as shown below.
\begin{proof}[Proof of Theorem \ref{thm:mainthm}]
Assume without loss of generality that $G$ is connected. Write $L_G=B^TWB$ as in Section \ref{sec:incidence} and fix $d>1$. 
Restrict attention to $\im(L_G)\cong \R^{n-1}$ and apply Theorem \ref{thm:linalg} to the columns $\{\v_i\}_{i\le m}$ of 
\[
  V_{n\times m} = (L_G^+)^{\frac{1}{2}}B^TW^{\frac{1}{2}},
\] 
which are indexed by the edges of $G$ and satisfy
\begin{align*}
  \sum_{i\le m} \v_i\v_i^T = VV^T &= (L_G^+)^{\frac{1}{2}}B^TWB(L_G^+)^{\frac{1}{2}}
  \\&= (L_G^+)^{\frac{1}{2}}L_G(L_G^+)^{\frac{1}{2}} = \id_{\im(L_G)}.
\end{align*}
Write the scalars $s_i\ge 0$ guaranteed by the theorem in the $m\times m$ diagonal
matrix $S(i,i)=s_i$ and set $L_H=B^TW^{\frac{1}{2}}SW^{\frac{1}{2}}B$. Then $L_H$ is the Laplacian of
the subgraph $H$ of $G$ with edge weights $\{\tilde{w}_i=w_i s_i\}_{i\in E}$, and $H$ has at most $d(n-1)$
edges since at most that many of the $s_i$ are nonzero. Also,

\begin{align*} 
  &\id_{\im(L_G)}\preceq 
  \sum_{i\le m}s_i\v_i\v_i^T = VSV^T 
  \preceq \kappa \cdot \id_{\im(L_G)}
  \qquad\textrm{for $\kappa=\frac{d+1+2\sqrt{d}}{d+1-2\sqrt{d}}$.}
\end{align*}
By the Courant-Fischer Theorem, this is equivalent to:
\begin{align*}
  1&\le \frac{
      y^TVSV^Ty
    }{
      y^Ty
    } 
    \le 
    \kappa 
 \qquad \forall y\in\im((L_G)^\frac{1}{2})=\im(L_G)&
\\\iff &1\le 
    \frac{
      y^T(L_G^+)^{\frac{1}{2}}L_H(L_G^+)^{\frac{1}{2}}y
    }{
      y^Ty
    }
  \le 
  \kappa
\qquad\forall y\in\im((L_G)^{\frac{1}{2}})&
\\\iff &1\le 
    \frac{
      x^TL_G^{\frac{1}{2}}(L_G^{+})^{\frac{1}{2}}L_H(L_G^+)^{\frac{1}{2}}L_G^{\frac{1}{2}}x
    }{
      x^TL_G^{\frac{1}{2}}L_G^{\frac{1}{2}}x
    }
    \le \kappa\quad\qquad\forall x\perp \mathbf{1}
\\\iff &1\le 
    \frac{
      x^TL_Hx
    }{
      x^TL_Gx
    }\le \kappa\quad\qquad\qquad\qquad\qquad\qquad\forall x\perp \mathbf{1},
\end{align*}
as desired.
\end{proof}

It is worth mentioning that the above reduction is essentially the same as
  the one in \cite{SpielmanSrivastava}. 
In that paper, the authors consider the
  symmetric projection matrix $\Pi = BL^{+}_GB^T$ whose columns $\{\Pi_e\}_{e\in
  E}$
  correspond to the edges of $G$.
They show, by a concentration lemma of Rudelson~\cite{rudl},
  that randomly sampling $O(n\log n)$ of the columns with probabilities proportional to
  $\|\Pi_e\|^2=w_e\mathsf{R_{eff}}(e)$ (where $\mathsf{R_{eff}}$ is the effective
  resistance)
  gives a matrix $\tilde{\Pi}$ that approximates $\Pi$ in the
  spectral norm and corresponds to a graph sparsifier, with high probability.
In this paper, we do essentially the same thing with two modifications:
  we eliminate $\Pi$ in order to simplify notation, since we are no longer
  following the intuition of sampling by effective resistances; and, instead of
  Rudelson's 
  sampling lemma, we use Theorem \ref{thm:linalg} to {\em deterministically}
  select $O(n)$ edges (equivalently, columns of $\Pi$).

The rest of this section is devoted to proving Theorem \ref{thm:linalg}.
The proof is constructive and yields a deterministic polynomial time algorithm for finding the
  scalars $s_i$, which can then be used to sparsify graphs, as advertised.

Given vectors $\{\v_i\}$, our goal is to choose a small set of coefficients $s_i$ so that 
  $A=\sum_i s_i\ppi_{i}\ppi_{i}^{T} $ is well-conditioned. 
We will build the matrix $A$ in steps, starting with $A=0$ and adding one vector
  $s_i\v_i\v_i^T$ at a time. 
Before beginning the proof, it will be instructive to study 
  how the eigenvalues and characteristic polynomial of a matrix evolve upon the addition of a vector.
This discussion should provide some intuition for the structure of the proof, and demystify 
  the origin of the `Twice-Ramanujan' number $\frac{d+1+2\sqrt{d}}{d+1-2\sqrt{d}}$ which 
  appears in our final result.
\subsection{Intuition for the Proof}
It is well known that the eigenvalues of $A+\ppi\ppi^T$ interlace those of $A$.
In fact, the new eigenvalues can be determined exactly by looking at the 
  characteristic polynomial of $A+\ppi\ppi^T$, which is computed using 
  Lemma \ref{lem:matrixdet} as follows:
\[ p_{A+\ppi\ppi^T}(x)=\det(xI-A-\ppi\ppi^T)=p_A(x)\left(1-\sum_j\frac{\langle
\ppi, u_j\rangle^2}{x-\lambda_i}\right),\]
where $\lambda_i$ are the eigenvalues of $A$ and $u_j$ are the corresponding eigenvectors.
The polynomial $p_{A+\v\v^T}(x)$ has two kinds of zeros $\lambda$:
\begin{enumerate}
\item Those for which $p_A(\lambda)=0$. These are equal to the eigenvalues
$\lambda_j$ of $A$ for
  which the added vector $\v$ is orthogonal to the corresponding eigenvector
  $u_j$, and which do not therefore `move' upon adding $\v\v^T$.
\item Those for which $p_A(\lambda)\neq 0$ and 
\[ f(\lambda)=\left(1-\sum_j\frac{\langle
\v,u_j\rangle^2}{\lambda-\lambda_j}\right)=0.\]
  These are the eigenvalues which have moved and strictly interlace the old
    eigenvalues. 
  The above equation immediately suggests a simple physical model which gives
    intuition as to where these new eigenvalues are located. 
  \begin{figure}[htp]\label{fig:phys}
  \centering
  \includegraphics[scale=.65]{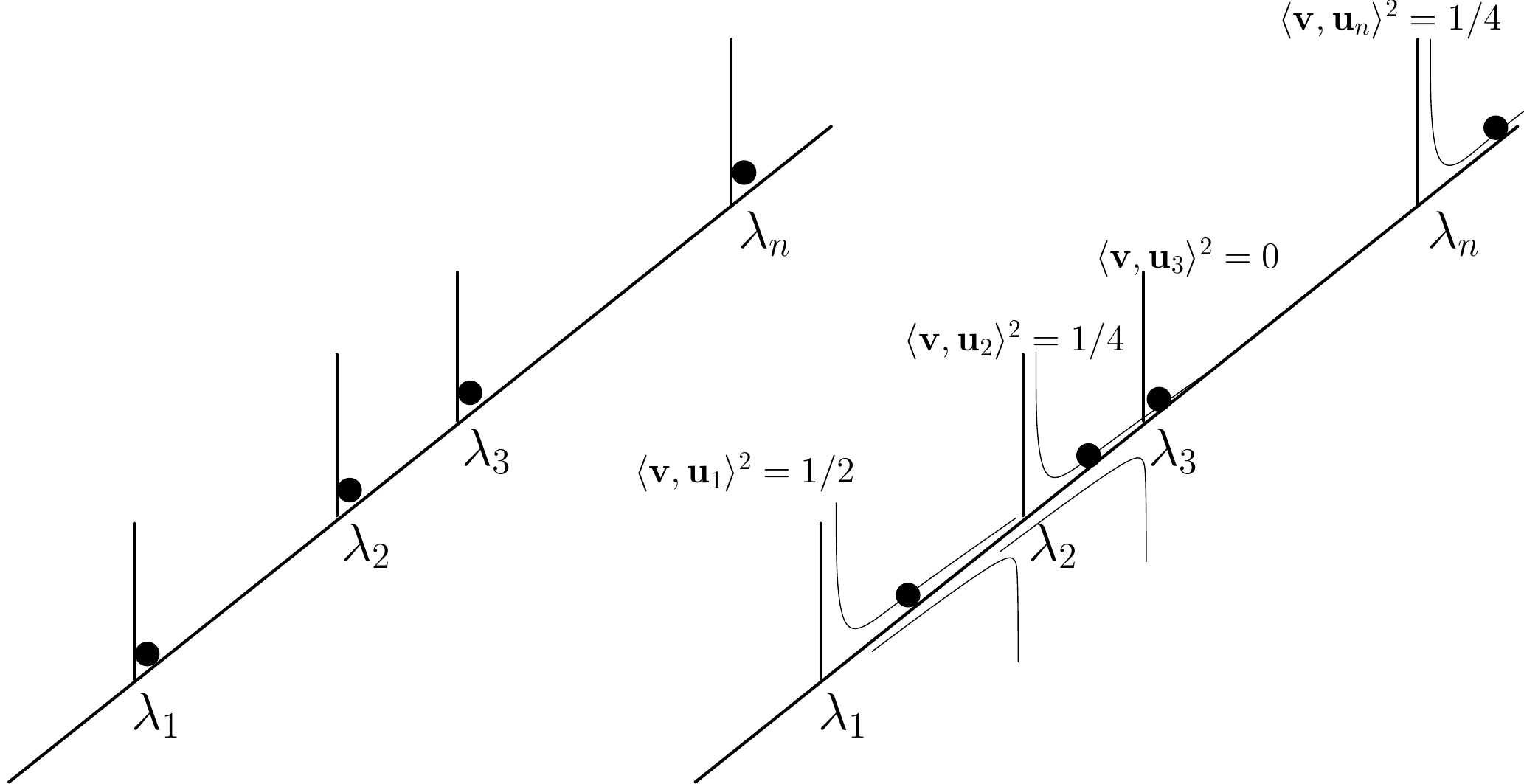}
  \caption{Physical model of interlacing eigenvalues.}
  \end{figure}

  \noindent \textbf{Physical Model.} We interpret the eigenvalues $\lambda$ as charged particles lying on a slope.
  On the slope are $n$ fixed, chargeless barriers located at the initial
    eigenvalues $\lambda_j$, and each particle is resting against one of the
    barriers under the influence of gravity.
  Adding the vector $\v\v^T$ corresponds to placing a charge of $\langle
    \v,u_j\rangle^2$ on the barrier corresponding to $\lambda_j$.
  The charges on the barriers repel those on the eigenvalues with a force
    that is proportional to the charge on the barrier and inversely proportional
    to the distance from the barrier --- i.e., the force from barrier $j$ is given
    by \[\frac{\langle \v,u_j\rangle^2}{\lambda-\lambda_j},\]
    a quantity which is positive for $\lambda_j$ `below' $\lambda$, which are
    pushing the partical `upward', and negative otherwise.
  The eigenvalues move up the slope until they reach an equilibrium in which
    the repulsive forces from the barriers cancel the effect of gravity, which
    we take to be a $+1$ in the downward direction.
  Thus the equilibrium condition corresponds exactly to having the total
  `downward pull' $f(\lambda)$ equal to zero.
\end{enumerate}

With this physical model in mind, we begin to consider what happens to the
  eigenvalues of $A$ when we add a {\em random} vector from our set $\{\v_i\}$.
The first observation is that for any eigenvector $u_j$ (in fact for any vector
  at all), the expected projection of a randomly chosen $\v\in\{\v_i\}_{i\le m}$
  is 
\[ \mathbb{E}_\v \langle \v,u_j\rangle^2 = \frac{1}{m}\sum_i \langle \v_i,u_j\rangle^2 = \frac{1}{m} u_j^T\left(\sum_i
\v_i\v_i^T\right)u_j = \frac{\|u_j\|^2}{m}=\frac{1}{m}.\]
Of course, this does not mean that there is any single vector $\v_i$ in
  our set that realizes this `expected behavior' of equal projections on the
  eigenvectors.
But if we were to add such a vector
\footnote{For concreteness, we remark that this `average' vector would be
  precisely
\[ \v_\textsf{avg} = \frac{1}{\sqrt{m}}\sum_j u_j.\]}
 in our physical model, we would add equal 
  charges of $1/m$ to each of the barriers, and we would expect all of the
  eigenvalues of $A$ to drift forward `steadily'.
In fact, one might expect that after sufficiently many iterations of this
  process, the eigenvalues would all march forward together, with no eigenvalue
  too far ahead or too far behind, and we would 
  end up in a position where $\lambda_{max}/\lambda_{min}$ is bounded.

In fact, this intuition turns out to be correct. 
Adding a vector with equal projections changes the characteristic polynomial 
  in the following manner:
\[ p_{A+\v_\textsf{avg}\v_\textsf{avg}^T}(x)=p_A(x)\left(1-\sum_j\frac{1/m}{x-\lambda_j}\right) =
p_A(x)-(1/m)p_A'(x),\]
since $p_A'(x) = \sum_j \prod_{i\neq j} (x-\lambda_i)$. 
If we start with $A=0$, which has characteristic polynomial $p_0(x)=x^n$,  
  then after $k$ iterations of this process we obtain the polynomial
  \[ p_{k}(x) = (I-(1/m)D)^k x^n\]
where $D$ is the derivative with respect to $x$.
Fortunately, iterating the operator $(I-\alpha D)$ for any
$\alpha>0$
  generates a standard family of orthogonal polynomials -- the {\em associated
  Laguerre polynomials} \cite{dette}. 
These polynomials are very well-studied and the locations of their zeros are
  known; in particular, after $k=dn$ iterations the ratio of the largest to the
  smallest zero is known \cite{dette} to be
  \[ \frac{d+1+2\sqrt{d}}{d+1-2\sqrt{d}},\]
  which is exactly what we want.

To prove the theorem, we will show that we can choose a sequence of actual vectors
  that realizes the expected behavior (i.e. the behavior of repeatedly adding $\v_\textsf{avg}$),
  as long as we are allowed to add 
  arbitrary fractional amounts of the $\v_i\v_i^T$ via the weights $s_i\ge 0$.
We will control the eigenvalues of our matrix by maintaining two barriers as 
  in the physical model, and keeping the eigenvalues between them. 
The lower barrier will `repel' the eigenvalues forward; the upper one will make sure
  they do not go too far. 
The barriers will move forward at a steady pace.
By maintaining that the total `repulsion' at every step of this process is
  bounded, we will be able to guarantee that there is always some multiple of 
  a vector to add that allows us to continue the process.

\subsection{Proof by Barrier Functions}
We begin by defining two `barrier' potential
  functions which measure the quality of the eigenvalues of a matrix. 
These potential functions are inspired by the inverse law 
  of repulsion in the physical model discussed in the last section.
\begin{definition}
For $u,l\in\R$ and $A$ a symmetric matrix with eigenvalues
$\lambda_1,\lambda_2,\ldots,\lambda_{n}$, define:
\[
   \Phi^u(A)\defeq\tr(uI-A)^{-1}
  =
   \sum_i\frac{1}{u-\lambda_i}
  \quad\textrm{(Upper potential)}.
\] 
\[
  \Phi_l(A)\defeq\tr(A-lI)^{-1}
  =
  \sum_i\frac{1}{\lambda_i-l}
   \quad\textrm{(Lower potential)}.\]
\end{definition}

As long as $A\prec uI$ and $A\succ lI$ (i.e., $\lmax(A)<u$ and $\lmin(A)>l$), these potential functions
  measure how far the eigenvalues of $A$ are from the barriers $u$ and $l$. 
In particular, they blow up as any eigenvalue approaches a
  barrier, since then $uI-A$ (or $A-lI$) approaches a singular matrix.
Their strength lies in that they reflect the locations of all the
  eigenvalues simultaneously: for instance, 
  $\Phi^{u}(A)\le 1$ implies that
  no $\lambda_i$ is within distance one of $u$, no $2$
  $\lambda_i$'s are at distance $2$, no $k$ are at distance $k$, and so on.
In terms of the physical model, the upper potential $\Phi^u(A)$ is equal to the total repulsion of the eigenvalues of
  $A$ from the upper barrier $u$, while $\Phi_l(A)$ is the analogous quantity for the lower barrier.

To prove the theorem, we will build the sum $\sum_i s_i\ppi_{i}\ppi_{i}^T$ iteratively, adding one vector at a time.
Specifically, we will construct a sequence of matrices 
\[
   0=A^{(0)},A^{(1)},\ldots, A^{(Q)}
\]
along with positive constants%
\footnote{On first reading the paper, we suggest the reader follow the proof with
  the assignment $\epsU = \epsL = 1$, $u_{0} = n$,
  $l_{0} = -n$, $\deltaU = 2$, $\deltaL = 1/3$.
  This will provide the bound $(6d+1)/ (d-1)$, and eliminates
  the need to use Claim~\ref{clm:lowercauchy}.
}
 $u_0,l_0,\deltaU,\deltaL,\epsU$ and $\epsL$ which satisfy the following conditions:
\begin{enumerate}
\item [(a)] Initially, the barriers are at $u=u_0$ and $l=l_0$ and the potentials are 
\[ \Phi^{u_0}(A^{(0)}) = \epsU \quad \text{and} \quad \Phi_{l_0}(A^{(0)})=\epsL.\]
\item [(b)] Each matrix is obtained by a rank-one update of the previous one ---
specifically by adding a positive multiple of an outer product of some $\v_i$.
\[ A^{(q+1)}=A^{(q)}+t\ppi\ppi^{T} \quad\textrm{for some
  $\ppi\in\{\ppi_{i}\}$ and $t\ge 0$.}\]
\item [(c)] If we increment the barriers $u$ and $l$ by $\deltaU$ and $\deltaL$ respectively
at each step, then the upper and lower potentials do not increase. For every
$q=0,1,\ldots Q$,
\[\Phi^{u+\deltaU}(A^{(q+1)})\le\Phi^{u}(A^{(q)})\le\epsU\quad\textrm{for
$u=u_0+q\deltaU$.}\]
\[\Phi_{l+\deltaL}(A^{(q+1)})\le\Phi_{l}(A^{(q)})\le\epsL\quad\textrm{for
$l=l_0+q\deltaL$.}\]
\item [(d)] No eigenvalue ever jumps across a barrier. For every $q=0,1,\ldots Q$,
\[
\lmax(A^{(q)})< u_0+q\deltaU
\quad \text{and} \quad 
\lmin(A^{(q)})> l_0+q\deltaL.
\]
\end{enumerate}
To complete the proof
 we will choose
 $u_0,l_0,\deltaU,\deltaL,\epsU$ and $\epsL$ so that
after $Q=dn$ steps, the condition number of $A^{(Q)}$ is bounded by
\[
 \frac{\lmax(A^{(Q)})}{\lmin(A^{(Q)})}
 \le \frac{u_0+dn\deltaU}{l_0+dn\deltaL}
 = \frac{d+1+2\sqrt{d}}{d+1-2\sqrt{d}}.
\]
By construction, $A^{(Q)}$ is a weighted sum of at most $dn$ of the vectors, as
desired.

The main technical challenge is to show that conditions (b) and (c) can be
satisfied simultaneously --- i.e., that there is always a choice of $\ppi\ppi^{T} $ to
add to the current matrix which allows us to shift {\em both} barriers up by a 
constant without increasing either potential. We achieve this in the following
three lemmas.

The first lemma concerns shifting the upper barrier. If we shift $u$ forward to
  $u+\deltaU$ without changing the matrix $A$, then the upper potential
  $\Phi^u(A)$ decreases since the eigenvalues $\lambda_i$ do not move and $u$ moves away from
  them.
This gives us room to add some multiple of a vector $t\v\v^T$, which will
  move the $\lambda_i$ towards $l$ and increase the potential, counteracting the
  initial decrease due to shifting.
The following lemma quantifies exactly how much of a given $\v\v^T$ we can add
  without increasing the potential beyond its original value before shifting.

\begin{lemma}[Upper Barrier Shift]\label{lem:upperupd} Suppose $\lmax(A)<u$,
  and $\ppi$ is any vector. If 
\begin{align*} \frac{1}{t} &\ge \frac{
  \ppi^{T} ((u+\deltaU)I-A)^{-2}\ppi}
  {\Phi^u(A)-\Phi^{u+\deltaU}(A)}
  +\ppi^{T} ((u+\deltaU)I-A)^{-1}\ppi
  \defeq U_A(\ppi)\end{align*}
then 
\[ \Phi^{u+\deltaU}(A+t\ppi\ppi^{T} )\le\Phi^u(A)\quad\textrm{and $\lmax(A+t\ppi\ppi^{T} )< u+\deltaU$}.\]
That is, if we add $t$ times $\ppi\ppi^{T} $ to $A$ and shift the upper barrier
by $\deltaU$, then we do not increase the upper potential.
\end{lemma}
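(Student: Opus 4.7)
The plan is a direct computation via the Sherman--Morrison formula, followed by algebraic rearrangement. Since $\lmax(A)<u<u+\deltaU$, the matrix $M \defeq (u+\deltaU)I-A$ is symmetric positive definite, and $(u+\deltaU)I - (A + t\ppi\ppi^T) = M - t\ppi\ppi^T$. I would apply Sherman--Morrison to $M - t\ppi\ppi^T$, obtaining
$$(M-t\ppi\ppi^T)^{-1} \;=\; M^{-1} \;+\; \frac{t\,M^{-1}\ppi\ppi^T M^{-1}}{1-t\,\ppi^T M^{-1}\ppi},$$
valid whenever the scalar denominator is nonzero. Taking the trace (and using $\tr(M^{-1}\ppi\ppi^T M^{-1}) = \ppi^T M^{-2}\ppi$) gives the exact identity
$$\Phi^{u+\deltaU}(A+t\ppi\ppi^T) \;=\; \Phi^{u+\deltaU}(A) \;+\; \frac{t\,\ppi^T M^{-2}\ppi}{1-t\,\ppi^T M^{-1}\ppi}.$$

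Next I would set $D \defeq \Phi^u(A)-\Phi^{u+\deltaU}(A)$, which is strictly positive since each term $\tfrac{1}{u-\lambda_i}-\tfrac{1}{u+\deltaU-\lambda_i}$ is positive (using $\lambda_i<u$). Then the desired inequality $\Phi^{u+\deltaU}(A+t\ppi\ppi^T)\le\Phi^u(A)$ becomes
$$\frac{t\,\ppi^T M^{-2}\ppi}{1-t\,\ppi^T M^{-1}\ppi} \;\le\; D,$$
which, provided $1-t\,\ppi^T M^{-1}\ppi > 0$, rearranges as
$$t\bigl(\ppi^T M^{-2}\ppi + D\,\ppi^T M^{-1}\ppi\bigr) \;\le\; D, \qquad\text{i.e.}\qquad \frac{1}{t} \;\ge\; \frac{\ppi^T M^{-2}\ppi}{D} + \ppi^T M^{-1}\ppi \;=\; U_A(\ppi),$$
which is exactly the hypothesis.

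It remains to justify the positivity $1-t\,\ppi^T M^{-1}\ppi>0$ used above, and simultaneously to deduce the eigenvalue bound $\lmax(A+t\ppi\ppi^T)<u+\deltaU$. Since $\ppi^T M^{-2}\ppi\ge 0$ and $D>0$, the hypothesis forces $1/t \ge U_A(\ppi) \ge \ppi^T M^{-1}\ppi$; the strict inequality $1/t>\ppi^T M^{-1}\ppi$ holds whenever $\ppi\ne 0$ (the case $\ppi=0$ is trivial), because then $\ppi^T M^{-2}\ppi>0$. Standard manipulation (e.g.\ the change of variable $x=M^{-1/2}y$ together with Cauchy--Schwarz) shows that $1-t\,\ppi^T M^{-1}\ppi>0$ is precisely the condition for $M-t\ppi\ppi^T$ to be positive definite; hence $(u+\deltaU)I-(A+t\ppi\ppi^T)\succ 0$, giving $\lmax(A+t\ppi\ppi^T)<u+\deltaU$.

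There is no real obstacle: the argument is the canonical Sherman--Morrison rank-one update computation. The only points requiring care are tracking the sign of the denominator $1-t\,\ppi^T M^{-1}\ppi$ so that Sherman--Morrison's formula is legitimate and the rearrangement preserves the inequality's direction, and then reading off the eigenvalue barrier condition from the positive definiteness of $M-t\ppi\ppi^T$.
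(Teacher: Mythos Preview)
Your argument is correct and essentially identical to the paper's: both apply Sherman--Morrison to compute $\Phi^{u+\deltaU}(A+t\ppi\ppi^T)$ exactly and then rearrange the resulting inequality to recover the hypothesis $1/t\ge U_A(\ppi)$. The only cosmetic difference is in establishing $\lmax(A+t\ppi\ppi^T)<u+\deltaU$: you deduce it directly from the positive definiteness of $M-t\ppi\ppi^T$ (via the factorization $M^{1/2}(I-tM^{-1/2}\ppi\ppi^TM^{-1/2})M^{1/2}$), whereas the paper argues by continuity that otherwise the potential would blow up at some intermediate $t'\le t$.
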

\noindent We remark that $U_{A} (\ppi)$ is linear in the outer product $\v \v^{T}$.
\begin{proof}
Let $u'=u+\deltaU$. By the Sherman-Morrison formula, we can write the updated
potential as:
\begin{align*}
&\Phi^{u+\deltaU}(A+t\ppi\ppi^{T} )=\tr(u'I-A-t\ppi\ppi^{T} )^{-1}
\\&=\tr\left((u'I-A)^{-1}+\frac{t(u'I-A)^{-1}\ppi\ppi^{T} (u'I-A)^{-1}}{1-t\ppi^{T} (u'I-A)^{-1}\ppi}\right)
\\&=\tr(u'I-A)^{-1}+\frac{t\tr(\ppi^{T} (u'I-A)^{-1}(u'I-A)^{-1}\ppi)}{1-t\ppi^{T} (u'I-A)^{-1}\ppi}
\\&\qquad\textrm{since $\tr$ is linear and $\tr(XY)=\tr(YX)$}
\\&=\Phi^{u+\deltaU}(A)+\frac{t\ppi^{T} (u'I-A)^{-2}\ppi}{1-t\ppi^{T} (u'I-A)^{-1}\ppi}
\\&=\Phi^{u}(A)-(\Phi^{u}(A)-\Phi^{u+\deltaU}(A))
  +\frac{\ppi^{T} (u'I-A)^{-2}\ppi}{1/t-\ppi^{T} (u'I-A)^{-1}\ppi}
\end{align*}

As $U_{A} (\ppi) > \ppi^{T} (u'I-A)^{-1}\ppi$,
  the last term is finite for $1/t \geq U_{A} (\ppi)$.
By now substituting any 
  $1/t\ge U_A(\ppi)$ we find
  $\Phi^{u+\deltaU}(A+t\ppi\ppi^{T} )\le \Phi^u(A)$.
This also tells us that $\lmax(A+t\ppi \ppi ^T) < u + \deltaU$, 
  as if this were not the case, then there would be some positive
  $t' \leq t$ for which $\lmax(A+t' \ppi \ppi ^T) = u + \deltaU$.
But, at such a $t'$, $\Phi^{u+\deltaU} (A + t' \ppi \ppi ^T)$
  would blow up, and we have just established that it is finite.
\end{proof}

The second lemma is about shifting the lower barrier.
Here, shifting $l$ forward to $l+\deltaL$ while keeping $A$ fixed has the
opposite effect --- it 
  increases the lower potential $\Phi_l(A)$ since the barrier $l$ moves 
  towards the eigenvalues $\lambda_i$.
Adding a multiple of a vector $t\v\v^T$ will move the $\lambda_i$ forward and away
  from the barrier, decreasing the potential.
Here, we quantify exactly how much of a given $\v\v^T$ we need to add 
  to compensate for the initial increase from shifting $l$, and return the potential to 
  its original value before the shift. 
\begin{lemma}[Lower Barrier Shift]\label{lem:lowerupd} Suppose $\lmin(A)>l$,
  $\Phi_{l} (A) \leq 1/\deltaL$,
  and $\ppi$ is any vector. If
\begin{align*} 0 < \frac{1}{t} &\le 
\frac{\ppi^{T} (A-(l+\deltaL)I)^{-2}\ppi}
  {\Phi_{l+\deltaL}(A)-\Phi_l(A)}
  -\ppi^{T} (A-(l+\deltaL)I)^{-1}\ppi
  \defeq L_A(\ppi)\end{align*}
then 
\[ \Phi_{l+\deltaL}(A+t\ppi\ppi^{T} )\le\Phi_l(A)
\quad \text{and} \quad 
\lmin(A+t\ppi\ppi^{T} )> l+\deltaL.\]
That is, if we add $t$ times $\ppi\ppi^{T} $ to $A$ and shift the lower barrier by
$\deltaL$, then we do not increase the lower potential.
\end{lemma}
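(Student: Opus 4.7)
The plan is to mirror the proof of Lemma~\ref{lem:upperupd} closely, with two sign changes (adding a PSD rank-one update pushes eigenvalues \emph{upward}, which \emph{decreases} the lower potential rather than increasing it) and one extra ingredient: the hypothesis $\Phi_l(A)\le 1/\deltaL$, which must be used to guarantee that the shifted barrier $l'\defeq l+\deltaL$ still lies strictly below the spectrum of $A$.

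First I would show $\lmin(A) > l'$. Since every summand in $\Phi_l(A)=\sum_i (\lambda_i-l)^{-1}$ is positive,
\[
  \frac{1}{\lmin(A)-l}\le\Phi_l(A)\le\frac{1}{\deltaL},
\]
which yields $\lmin(A)\ge l+\deltaL=l'$; strictness follows (for $n\ge 2$) from the fact that the remaining eigenvalues contribute strictly positive terms to $\Phi_l(A)$. This guarantees that $A-l'I$ is positive definite, so that $L_A(\v)$ is well-defined and $\Phi_{l'}(A)$ is finite.

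Next I would apply the Sherman-Morrison formula to $(A+t\v\v^T)-l'I$ and take the trace, yielding
\[
  \Phi_{l'}(A+t\v\v^T)
  =\Phi_{l'}(A)-\frac{t\,\v^T(A-l'I)^{-2}\v}{1+t\,\v^T(A-l'I)^{-1}\v}.
\]
The key sign difference from Lemma~\ref{lem:upperupd} is the $+$ in the denominator, which arises because we are adding $t\v\v^T$ to the positive-definite matrix $A-l'I$ rather than subtracting it from $u'I-A$. To obtain $\Phi_{l'}(A+t\v\v^T)\le\Phi_l(A)$, I would rearrange to the equivalent inequality
\[
  \Phi_{l'}(A)-\Phi_l(A)\le \frac{\v^T(A-l'I)^{-2}\v}{1/t+\v^T(A-l'I)^{-1}\v},
\]
divide numerator and denominator of the fraction by $t$, and use that $\Phi_{l'}(A)-\Phi_l(A)>0$ (moving $l$ toward the spectrum strictly increases the lower potential, so we may safely divide by it) to extract the condition $1/t\le L_A(\v)$.

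Finally, the bound $\lmin(A+t\v\v^T)>l'$ follows either from monotonicity (since $t\v\v^T\succeq 0$ gives $A+t\v\v^T\succeq A$, whence $\lmin(A+t\v\v^T)\ge\lmin(A)>l'$) or, as in Lemma~\ref{lem:upperupd}, by observing that $\Phi_{l'}(A+t\v\v^T)$ has just been shown to be finite, which rules out any eigenvalue of $A+t\v\v^T$ coinciding with $l'$. The main obstacle is just careful bookkeeping: tracking the reversed sign in the Sherman-Morrison denominator and the direction of inequality when dividing through by $\Phi_{l'}(A)-\Phi_l(A)$. The hypothesis $\Phi_l(A)\le 1/\deltaL$, absent from the upper-barrier case, is the crucial additional ingredient and reflects the asymmetry that shifting $l$ forward moves the barrier \emph{toward} the eigenvalues.
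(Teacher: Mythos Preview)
Your proposal is correct and follows essentially the same route as the paper: use $\Phi_l(A)\le 1/\deltaL$ to conclude $\lmin(A)>l+\deltaL$ (whence $\lmin(A+t\ppi\ppi^T)>l+\deltaL$ by monotonicity), then apply Sherman--Morrison to $(A-l'I)+t\ppi\ppi^T$ and rearrange to extract the condition $1/t\le L_A(\ppi)$. If anything, you are slightly more careful than the paper in flagging the $n\ge 2$ requirement for strictness and in noting explicitly that $\Phi_{l'}(A)-\Phi_l(A)>0$ is needed to divide through.
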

\begin{proof} 
First, observe that $\lmin(A)>l$ and  $\Phi_{l} (A) \leq 1/\deltaL$
  imply that $\lmin (A) > l + \deltaL$.
So, for every $t > 0$, $\lmin(A+t\ppi\ppi^{T} )> l+\deltaL$.

Now proceed as in the proof for the upper potential. Let $l'=l+\deltaL$.
By Sherman-Morrison, we have:
\begin{align*}
&\Phi_{l+\deltaL}(A+t\ppi\ppi^{T} )=\tr(A+t\ppi\ppi^{T} -l'I)^{-1}
\\&=\tr\left((A-l'I)^{-1}-\frac{t(A-l'I)^{-1}\ppi\ppi^{T} (A-l'I)^{-1}}{1+t\ppi^{T} (A-l')^{-1}\ppi}\right)
\\&=\tr(A-l'I)^{-1}-\frac{t\tr(\ppi^{T} (A-l'I)^{-1}(A-l'I)^{-1}\ppi)}{1+t\ppi^{T} (A-l'I)^{-1}\ppi}
\\&=\Phi_{l+\deltaL}(A)-\frac{t\ppi^{T} (A-l'I)^{-2}\ppi}{1+t\ppi^{T} (A-l'I)^{-1}\ppi}
\\&=\Phi_{l}(A)+(\Phi_{l+\deltaL}(A)-\Phi_{l}(A))-\frac{\ppi^{T} (A-l'I)^{-2}\ppi}{1/t+\ppi^{T} (A-l'I)^{-1}\ppi}
\end{align*}
Rearranging shows that $\Phi_{l+\deltaL}(A+t\ppi\ppi^{T} )\le\Phi_l(A)$ when $1/t\le
L_A(\ppi)$.
\end{proof}

The third lemma identifies the conditions under which we can find a single
  $t\v\v^T$ which allows us to maintain both potentials while shifting barriers, and thereby continue the
  process.
The proof that such a vector exists is by an averaging argument, so 
  this can be seen as the step in which we relate the behavior of actual vectors
  to the behavior of the expected vector $\v_\textsf{avg}$.
Notice that the use of variable weights $t$, from which the eventual
  $s_i$ arise, is crucial to this part of the proof.
\begin{lemma}[Both Barriers]\label{lem:shiftboth} 
If 
$\lmax (A) < u$, $\lmin (A) > l$,
$\Phi^u(A)\le \epsU$,
$\Phi_l(A)\le\epsL$, and $\epsU, \epsL, \deltaU$ and $\deltaL$ satisfy
\begin{equation}\label{bothcond} 0\le\frac{1}{\deltaU}+\epsU \le \frac{1}{\deltaL}-\epsL \end{equation}
then there exists an $i$ and positive $t$ for which
\[
 L_A(\ppi_{i})\geq 1/t \geq U_A(\ppi_{i}),
\quad 
\lmax (A + t \ppi_{i} \ppi_{i}^{T}) < u + \deltaU ,
\]\[\quad 
\text{and}
\quad 
\lmin (A + t \ppi_{i} \ppi_{i}^{T}) > l + \deltaL.
\]

 \end{lemma}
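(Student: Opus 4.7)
The approach will be an averaging argument. Using the hypothesis $\sum_i \v_i\v_i^T = I$ to collapse every quadratic form $\v_i^T N \v_i$ into a trace, I will compute $\sum_i U_A(\v_i)$ and $\sum_i L_A(\v_i)$ and bound the former above and the latter below, so that condition (\ref{bothcond}) forces $\sum_i L_A(\v_i) \ge \sum_i U_A(\v_i)$. Some index $i$ must then satisfy $L_A(\v_i) \ge U_A(\v_i)$, and for any $t$ with $1/t \in [U_A(\v_i), L_A(\v_i)]$ the eigenvalue and potential bounds follow immediately from Lemmas \ref{lem:upperupd} and \ref{lem:lowerupd}.

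For the upper sum, set $u' = u + \deltaU$. Collapsing to traces gives
\[
\sum_i U_A(\v_i) = \frac{\tr(u'I - A)^{-2}}{\Phi^u(A) - \Phi^{u'}(A)} + \Phi^{u'}(A).
\]
Since $u' - \lambda_j \ge u - \lambda_j > 0$, the elementary inequality $(u'-\lambda_j)^{-2} \le [(u-\lambda_j)(u'-\lambda_j)]^{-1}$, summed over $j$, yields $\tr(u'I-A)^{-2} \le \frac{1}{\deltaU}(\Phi^u(A) - \Phi^{u'}(A))$ and hence $\sum_i U_A(\v_i) \le \frac{1}{\deltaU} + \Phi^u(A) \le \frac{1}{\deltaU} + \epsU$.

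The lower sum is the crux. Write $l' = l+\deltaL$, $M = A - l'I$, $D = \Phi_{l'}(A) - \Phi_l(A)$, and $x_j = \lambda_j - l$. The analogous trace identity gives $\sum_i L_A(\v_i) = \tr(M^{-2})/D - \Phi_{l'}(A)$, and what I want is $\sum_i L_A(\v_i) \ge \frac{1}{\deltaL} - \Phi_l(A)$, equivalently $\tr(M^{-2}) \ge D/\deltaL + D^2$. The naive bound $(x_j-\deltaL)^2 \le x_j(x_j-\deltaL)$ only produces $\tr(M^{-2}) \ge D/\deltaL$, which falls short by $D^2$. To close the gap I will apply Cauchy-Schwarz via the splitting
\[
\frac{1}{x_j(x_j-\deltaL)} = \frac{1}{\sqrt{x_j}} \cdot \frac{1}{\sqrt{x_j}(x_j-\deltaL)},
\]
obtaining
\[
\frac{D^2}{\deltaL^2} = \left(\sum_j \frac{1}{x_j(x_j-\deltaL)}\right)^2 \le \Phi_l(A) \sum_j \frac{1}{x_j(x_j-\deltaL)^2},
\]
and then recognise the last sum via the partial-fraction identity $\frac{1}{(x_j-\deltaL)^2} - \frac{1}{x_j(x_j-\deltaL)} = \frac{\deltaL}{x_j(x_j-\deltaL)^2}$ as $\frac{1}{\deltaL}[\tr(M^{-2}) - D/\deltaL]$. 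This rearranges to $\tr(M^{-2}) - D/\deltaL \ge D^2/(\deltaL\,\Phi_l(A))$, and since (\ref{bothcond}) forces $\Phi_l(A) \le \epsL \le 1/\deltaL$, the factor $1/(\deltaL\,\Phi_l(A)) \ge 1$ upgrades this to the desired $\tr(M^{-2}) \ge D/\deltaL + D^2$. This Cauchy-Schwarz step is the one genuine obstacle; everything else is bookkeeping.

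Putting the two estimates together with (\ref{bothcond}) gives
\[
\sum_i L_A(\v_i) \ge \tfrac{1}{\deltaL} - \epsL \ge \tfrac{1}{\deltaU} + \epsU \ge \sum_i U_A(\v_i),
\]
so some $\v_i$ (necessarily nonzero, since zero vectors contribute nothing) satisfies $L_A(\v_i) \ge U_A(\v_i) > 0$. Any $t$ with $1/t \in [U_A(\v_i), L_A(\v_i)]$ is strictly positive, and Lemmas \ref{lem:upperupd} and \ref{lem:lowerupd} then deliver both the nonincrease of $\Phi^u$ and $\Phi_l$ under the barrier shift and the strict eigenvalue bounds $\lmax(A+t\v_i\v_i^T) < u+\deltaU$ and $\lmin(A+t\v_i\v_i^T) > l+\deltaL$, completing the proof.
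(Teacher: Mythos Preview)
Your proof is correct and is essentially the same as the paper's: both use the averaging argument via $\sum_i \v_i\v_i^T = I$ to reduce to traces, bound $\sum_i U_A(\v_i) \le 1/\deltaU + \epsU$ by the elementary inequality $(u'-\lambda_j)^{-2} \le [(u-\lambda_j)(u'-\lambda_j)]^{-1}$, and handle the lower sum with exactly the Cauchy--Schwarz splitting $\frac{1}{x_j(x_j-\deltaL)} = \frac{1}{\sqrt{x_j}}\cdot\frac{1}{\sqrt{x_j}(x_j-\deltaL)}$ followed by the partial-fraction identity you wrote down. The only organizational difference is that the paper isolates the lower-barrier inequality as a separate Claim~\ref{clm:lowercauchy}, whereas you fold it inline; the mathematics is identical.
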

\begin{proof}
We will show that
\[
  \sum_i L_A(\ppi_{i})\ge\sum_i U_A(\ppi_{i}),
\]
 from which the claim will follow by Lemmas~\ref{lem:upperupd} and~\ref{lem:lowerupd}.
We begin by bounding
\begin{align*}
&\sum_i U_A(\ppi_{i}) 
=\frac{
 \sum_i \ppi_{i}^{T} ((u+\deltaU)I-A)^{-2}\ppi_{i}}
  {\Phi^u(A)-\Phi^{u+\deltaU}(A)}
  +\sum_i \ppi_{i}^{T} ((u+\deltaU)I-A)^{-1}\ppi_{i}
\\&=\frac{
   ((u+\deltaU)I-A)^{-2}\bullet(\sum_i\ppi_{i}\ppi_{i}^{T} )}
  {\Phi^u(A)-\Phi^{u+\deltaU}(A)}
  +((u+\deltaU)I-A)^{-1}\bullet\left(\sum_i\ppi_{i}\ppi_{i}^{T} \right)
\\&=\frac{
   \tr((u+\deltaU)I-A)^{-2}}
  {\Phi^u(A)-\Phi^{u+\deltaU}(A)}
  +\tr((u+\deltaU)I-A)^{-1}
\\&\qquad\textrm{since $\sum_i \ppi_{i}\ppi_{i}^{T} =I$ and $X\bullet I=\tr(X)$}
\\&= \frac{
 \sum_i (u+\deltaU-\lambda_i)^{-2}
}{
 \sum_i (u-\lambda_i)^{-1} -\sum_i (u+\deltaU-\lambda_i)^{-1}}
  +\Phi^{u+\deltaU}(A)
\\&= \frac{
 \sum_i (u+\deltaU-\lambda_i)^{-2}}
  {\deltaU \sum_i (u-\lambda_i)^{-1}(u+\deltaU-\lambda_i)^{-1}}
  +\Phi^{u+\deltaU}(A)
\\&\le 1/\deltaU
  +\Phi^{u+\deltaU}(A),
\\&\quad\textrm{as $\sum_i (u-\lambda_i)^{-1}(u+\deltaU-\lambda_i)^{-1}\ge
\sum_i (u+\deltaU-\lambda_i)^{-2}$}
\\&\le 1/\deltaU+\Phi^{u}(A)\le 1/\deltaU+\epsU.
\end{align*}

On the other hand, we have
\begin{align*}
&\sum_i L_A(\ppi_i) 
=\frac{
 \sum_i \ppi_{i}^{T} ((A-(l+\deltaL))^{-2}\ppi_{i}}
  {\Phi_{l+\deltaL}(A)-\Phi_l(A)}
  -\sum_i \ppi_{i}^{T} (A-(l+\deltaL)I)^{-1}\ppi_{i}
\\&=\frac{
   (A-(l+\deltaL)I)^{-2}\bullet(\sum_i\ppi_{i}\ppi_{i}^{T} )}
  {\Phi_{l+\deltaL}(A)-\Phi_l(A)}
  -(A-(l+\deltaL)I)^{-1}\bullet\left(\sum_i\ppi_{i}\ppi_{i}^{T} \right)
\\&=\frac{
   \tr(A-(l+\deltaL)I)^{-2}}
  {\Phi_{l+\deltaL}(A)-\Phi_l(A)}
  -\tr(A-(l+\deltaL)I)^{-1}
\\&\qquad\textrm{since $\sum_i \ppi_{i}\ppi_{i}^{T} =I$ and $X\bullet I=\tr(X)$}
\\&=\frac{
 \sum_i (\lambda_i-l-\deltaL)^{-2}}
  {\sum_i (\lambda_i-l-\deltaL)^{-1}- \sum_i (\lambda_i-l)^{-1}
  }
  -\sum_i (\lambda_i-l-\deltaL)^{-1}.
\\
& \geq 1/ \deltaL -\sum_i (\lambda_i-l)^{-1}
 = 1/\deltaL - \epsilon_L,
\end{align*}
by Claim~\ref{clm:lowercauchy}. 

Putting these together, we find that 
\[ \sum_i U_A(\ppi_{i})\le
\frac{1}{\deltaU}+\epsU\le\frac{1}{\deltaL}-\epsL\le\sum_i
L_A(\ppi_{i}),\] as desired.
\end{proof}

\begin{claim}\label{clm:lowercauchy}
If $\lambda_{i} > l$ for all $i$,
   $0 \leq \sum_i (\lambda_{i} - l)^{-1} \leq \epsL$,
  and $1/\deltaL - \epsL  \geq 0$,
then
\[\frac{\sum_i (\lambda_i-l-\deltaL)^{-2}}
  {\sum_i (\lambda_i-l-\deltaL)^{-1}-\sum_i (\lambda_i-l)^{-1} }
  -\sum_i \frac{1}{\lambda_i-l-\deltaL} \]
\begin{equation}\label{eqn:lowercauchy}
  \ge
 \frac{1}{\deltaL}-\sum_i\frac{1}{\lambda_i-l}.
\end{equation}
\end{claim}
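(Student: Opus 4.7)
\medskip

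\noindent\textbf{Proof plan for Claim~\ref{clm:lowercauchy}.}
The plan is to substitute cleaner variables, rearrange the inequality into a single polynomial inequality in those variables, and then reduce to Cauchy--Schwarz using an algebraic identity relating the two kinds of terms.

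First, write $a_i = 1/(\lambda_i - l)$ and $b_i = 1/(\lambda_i - l - \deltaL)$, which are positive (since the hypothesis $\sum_i a_i \le \epsL \le 1/\deltaL$ forces each $\lambda_i - l \ge \deltaL$, and strictness is needed to make the original expression well-defined). Set $A = \sum_i a_i$, $B = \sum_i b_i$, and $P = \sum_i a_i b_i$. The key identity is
\[
  b_i - a_i \;=\; \frac{\deltaL}{(\lambda_i - l)(\lambda_i - l - \deltaL)} \;=\; \deltaL \, a_i b_i,
\]
so that $B - A = \deltaL P$. Using this, the desired inequality \eqref{eqn:lowercauchy}, namely $\frac{\sum_i b_i^2}{B-A} - B \;\ge\; \frac{1}{\deltaL} - A$, rearranges (after multiplying through by $B-A = \deltaL P \ge 0$) to
\[
  \sum_i b_i^2 \;\ge\; (B-A)^2 + \frac{B-A}{\deltaL} \;=\; \deltaL^2 P^2 + P.
\]

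Next, I would exploit the same identity in a second way: since $b_i = a_i + \deltaL a_i b_i$, multiplying by $b_i$ and summing gives
\[
  \sum_i b_i^2 \;=\; \sum_i a_i b_i + \deltaL \sum_i a_i b_i^2 \;=\; P + \deltaL \sum_i a_i b_i^2.
\]
Substituting this into the previous displayed inequality, the claim reduces to showing
\[
  \sum_i a_i b_i^2 \;\ge\; \deltaL \, P^2.
\]
By Cauchy--Schwarz applied to the vectors $\sqrt{a_i}$ and $\sqrt{a_i}\, b_i$,
\[
  P^2 \;=\; \Bigl(\sum_i \sqrt{a_i} \cdot \sqrt{a_i}\, b_i \Bigr)^2 \;\le\; \Bigl(\sum_i a_i\Bigr)\Bigl(\sum_i a_i b_i^2\Bigr) \;=\; A \sum_i a_i b_i^2,
\]
so $\sum_i a_i b_i^2 \ge P^2/A$. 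Since the hypotheses give $A \le \epsL \le 1/\deltaL$, this yields $\sum_i a_i b_i^2 \ge \deltaL P^2$ as required (the case $P=0$ is trivial).

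The main obstacle is spotting the right rearrangement: the raw inequality looks messy, but once one notices the twin identities $b_i - a_i = \deltaL a_i b_i$ and $b_i^2 = a_i b_i + \deltaL a_i b_i^2$, everything collapses to a one-line Cauchy--Schwarz in which the side condition $\epsL \le 1/\deltaL$ plays exactly the role of making the resulting inequality tight in the correct direction.
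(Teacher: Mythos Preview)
Your proof is correct and follows essentially the same route as the paper: both arguments multiply through by the (positive) denominator, use the identity $b_i-a_i=\deltaL\,a_ib_i$ to rewrite the difference of reciprocals, cancel the common term $P=\sum_i a_ib_i$ to reduce to $\sum_i a_ib_i^2\ge \deltaL P^2$, and finish with the same Cauchy--Schwarz step $P^2\le A\sum_i a_ib_i^2$ together with $A\le\epsL\le 1/\deltaL$. Your substitution $a_i,b_i,A,B,P$ just makes the paper's algebra more transparent; the underlying computation is identical.
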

\begin{proof}
We have 
\[
\deltaL \leq 1/\epsL \leq \lambda_{i} - l,
\]
for every $i$.
So, the denominator of the left-most term on the left-hand side is positive,
 and the claimed inequality is equivalent to
\begin{align*}
\sum_i (\lambda_i-l-\deltaL)^{-2}
&\ge 
\left( {\sum_i \frac{1}{\lambda_i-l-\deltaL}-\sum_i\frac{1}{\lambda_i-l}}\right)
\left(  
 \frac{1}{\deltaL}+\sum_i
 \frac{1}{\lambda_i-l-\deltaL}-\sum_i\frac{1}{\lambda_i-l}\right)
\\&=
\left(\deltaL\sum_i\frac{1}{(\lambda_i-l-\deltaL)(\lambda_i-l)}\right)
\left(  
 \frac{1}{\deltaL}+\deltaL\sum_i\frac{1}{(\lambda_i-l-\deltaL)(\lambda_i-l)}
\right)
\\&=
\sum_i\frac{1}{(\lambda_i-l-\deltaL)(\lambda_i-l)}
+ \left(  
\deltaL\sum_i\frac{1}{(\lambda_i-l-\deltaL)(\lambda_i-l)} \right)^2,
  \end{align*}
which, by moving the first term on the RHS to the LHS, is just
\begin{align*} 
&\deltaL\sum_i \frac{1}{(\lambda_i-l-\deltaL)^2(\lambda_i-l)}
\ge
\left(  \deltaL\sum_i\frac{1}{(\lambda_i-l-\deltaL)(\lambda_i-l)}
\right)^2.\end{align*}
By Cauchy-Schwartz,
\begin{align*}
\left(  \deltaL\sum_i\frac{1}{(\lambda_i-l-\deltaL)(\lambda_i-l)} \right)^2
&\le 
  \left( \deltaL\sum_i\frac{1}{\lambda_i-l}\right)\left(\deltaL\sum_i\frac{1}{(\lambda_i-l-\deltaL)^2(\lambda_i-l)}\right)
\\
&\le
\left( \deltaL\epsL \right)\left(\deltaL\sum_i\frac{1}{(\lambda_i-l-\deltaL)^2(\lambda_i-l)}\right)
\\&\qquad 
 \text{since
$\sum (\lambda_{i} - l)^{-1} \leq \epsL$}\\
&\le
1\left(\deltaL\sum_i\frac{1}{(\lambda_i-l-\deltaL)^2(\lambda_i-l)}\right),
\\&\qquad  \text{since }\frac{1}{\deltaL}-\epsL\ge 0,
\end{align*}
and so (\ref{eqn:lowercauchy}) is established.
\end{proof}

\begin{proof}[Proof of Theorem \ref{thm:linalg}] All we need to do now is set $\epsU, \epsL, \deltaU$,
and $\deltaL$ in a manner that satisfies Lemma \ref{lem:shiftboth} and gives a good
bound on the condition number. Then, we can take $A^{(0)}=0$ and 
construct $A^{(q+1)}$ from $A^{(q)}$ by choosing any vector $\ppi_{i}$ with
\[ L_{A^{(q)}}(\ppi_{i})\ge U_{A^{(q)}}(\ppi_{i})\] (such a vector is guaranteed to exist by Lemma
\ref{lem:shiftboth})
and setting $A^{(q+1)}=A^{(q)}+t\ppi_{i}\ppi_{i}^{T} $ for any $t\ge 0$ satisfying:
\[ L_{A^{(q)}}(\ppi_{i})\ge\frac{1}{t}\ge U_{A^{(q)}}(\ppi_{i}).\]

It is sufficient to take
\begin{alignat*}{3}
\deltaL  & = 1 \qquad 
& \epsL  & = \frac{1}{\sqrt{d}} \qquad 
& l_{0} & = -n / \epsL\\
\deltaU & = \frac{\sqrt{d} + 1}{\sqrt{d}-1} \qquad 
& \epsU & = \frac{\sqrt{d} - 1}{d + \sqrt{d}} \qquad 
& u_{0} & = n/\epsU.
\end{alignat*}

We can check that:
\begin{align*}
\frac{1}{\deltaU}+\epsU &=
  \frac{\sqrt{d}-1}{\sqrt{d}+1} + \frac{\sqrt{d}-1}{\sqrt{d} (\sqrt{d}+1)}
 = 1 - \frac{1}{\sqrt{d}}
 =\frac{1}{\deltaL}-\epsL
\end{align*}
so that (\ref{bothcond}) is satisfied.

The initial potentials are
$\Phi^{\frac{n}{\epsU}}(0) = \epsU$ and
$\Phi_{\frac{n}{\epsL}}(0)= \epsL$. After $dn$
steps, we have
\begin{align*}
\frac{\lmax(A^{(dn)})}{\lmin(A^{(dn)})}&\le
\frac{n/\epsU+dn\deltaU}{-n/\epsL+dn\deltaL}
 \\&= \frac{
        \frac{d + \sqrt{d}}{\sqrt{d}-1} + d\frac{\sqrt{d}+1}{\sqrt{d}-1}
      }{
        d - \sqrt{d}
      }
\\&= \frac{d+2\sqrt{d}+1}{d-2\sqrt{d}+1},
\end{align*}
as desired. 
\end{proof}

To turn this proof into an algorithm, one must first compute the vectors $\ppi_{i}$,
  which can be done in time $\bigO{n^{2} m}$.
For each iteration of the algorithm, we must compute 
  $((u + \deltaU)I - A)^{-1}$,  $((u + \deltaU)I - A)^{-2}$,
  and the same matrices for the lower potential function.
This computation can be performed in time $\bigO{n^{3}}$.
Finally, we can decide which edge to add in each iteration by computing
  $U_{A} (\ppi_{i})$ and $L_{A} (\ppi_{i})$ for each edge, which can be done
  in time $\bigO{n^{2} m}$.
As we run for $d n$ iterations, the total time of the algorithm is
  $\bigO{d n^{3} m}$.

\section{Sparsifiers of the Complete Graph}\label{sec:expanders}
Let $G = (V,E)$ be the complete graph on $n$ vertices,
  and let $H = (V,F,w)$ be a weighted graph of
  average degree $d$ that $(1+\epsilon )$-approximates
  $G$.
As $x^{T} L_{G} x = n \norm{x}^{2}$ for every $x$ orthogonal
  to $\bvec{1}$, it is immediate that every vertex of $H$
  has weighed degree between $n$ and $(1+\epsilon) n$.
Thus, one should think of $H$ as being an expander graph in which
  each edge weight has been multiplied by $n/d$.

As $H$ is weighted and can be irregular, it may at first seem
  strange to view it as an expander.
However, it may easily be shown to have the properties that define
   expanders: it has high edge-conductance,
  random walks mix rapidly on $H$ and converge to an almost-uniform distribution,  and it satisfies the
  Expander Mixing Property (see~\cite{AlonChung} 
  or~\cite[Lemma 2.5]{ExpanderSurvey}).
High edge-conductance and rapid mixing would not be so interesting if
  the weighted degrees were not nearly uniform ---
for example, the star graph has both of these properties,
but the random walk on the star graph converges to a very non-uniform
  distribution, and the star does not satisfy the Expander Mixing Property.
For the convenience of the reader, we include a proof that $H$
  has the Expander Mixing Property below.
\begin{lemma}\label{lem:mixing}
Let $L_{H} = (V,E,w)$ be a graph
  that $( 1+\epsilon )$-approximates $L_{G}$,
  the complete graph on $V$.
Then, for every pair of disjoint sets $S$ and $T$,
\[
  \sizeof{w (S,T) - \left(1 + \frac{\epsilon}{2} \right) \sizeof{S} \sizeof{T}} 
\leq n (\epsilon/2)  \sqrt{\sizeof{S} \sizeof{T}},
\]
where $w (S,T)$ denotes the sum of the weights of edges
  between $S$ and $T$.
\end{lemma}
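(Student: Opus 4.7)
The plan is to convert the spectral approximation $L_G \preceq L_H \preceq (1+\epsilon) L_H$ into a statement about the bilinear form $\chi_S^T L_H \chi_T$, and then use the fact that for the complete graph $L_G = nI - J$ acts as $nI$ on $\mathbf{1}^\perp$, so the approximation gives an operator norm bound on a suitable ``error'' matrix.

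First I would write the key identity: for disjoint $S, T \subseteq V$, a direct expansion of the Laplacian quadratic form gives
\[
 \chi_S^T L_H \chi_T = -w(S,T) \quad \text{and} \quad \chi_S^T L_G \chi_T = -|S||T|,
\]
since each edge between $S$ and $T$ contributes $(1-0)(0-1) = -1$ (times its weight), and edges not crossing the cut contribute $0$. Next, center the approximation by writing
\[
 M \defeq L_H - \left(1 + \tfrac{\epsilon}{2}\right) L_G,
\]
so that the hypothesis $L_G \preceq L_H \preceq (1+\epsilon) L_G$ becomes $-\tfrac{\epsilon}{2} L_G \preceq M \preceq \tfrac{\epsilon}{2} L_G$. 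Since both $L_H$ and $L_G$ annihilate $\mathbf{1}$, so does $M$, and on $\mathbf{1}^\perp$ we have $L_G = n\cdot \id$, hence $\|M|_{\mathbf{1}^\perp}\| \leq n\epsilon/2$.

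Substituting $L_H = (1+\epsilon/2) L_G + M$ into the bilinear identity gives
\[
 w(S,T) - \left(1+\tfrac{\epsilon}{2}\right)|S||T| = -\chi_S^T M \chi_T.
\]
Now decompose $\chi_S = \tfrac{|S|}{n}\mathbf{1} + \chi_S^\perp$ and similarly for $T$. Because $M\mathbf{1} = 0$, the only surviving term is $(\chi_S^\perp)^T M \chi_T^\perp$. Applying the spectral norm bound together with Cauchy--Schwarz and the elementary estimates $\|\chi_S^\perp\|^2 = |S|(1-|S|/n) \leq |S|$ and $\|\chi_T^\perp\|^2 \leq |T|$ yields
\[
 \left|\chi_S^T M \chi_T\right| \leq \|M|_{\mathbf{1}^\perp}\| \cdot \|\chi_S^\perp\| \cdot \|\chi_T^\perp\| \leq \tfrac{n\epsilon}{2}\sqrt{|S||T|},
\]
which is the desired inequality.

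There is no real obstacle here; the only mild subtlety is recognizing that the approximation is most naturally symmetrized around $(1+\epsilon/2) L_G$ rather than $L_G$, which is exactly what produces the constant $n\epsilon/2$ on the right-hand side (as opposed to a worse $n\epsilon$). Everything else is a routine consequence of the spectral identity $L_G|_{\mathbf{1}^\perp} = n\cdot \id$ for the complete graph.
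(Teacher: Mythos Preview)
Your proof is correct and follows essentially the same approach as the paper: define $M = L_H - (1+\epsilon/2)L_G$, use the bilinear identities $\chi_S^T L_H \chi_T = -w(S,T)$ and $\chi_S^T L_G \chi_T = -|S||T|$, and bound $|\chi_S^T M \chi_T|$ by the operator norm $\|M\|\le n\epsilon/2$. The paper skips the projection onto $\mathbf{1}^\perp$ and simply uses $\|x\|=\sqrt{|S|}$, $\|y\|=\sqrt{|T|}$ directly, so your explicit decomposition is a harmless extra step; also note the small typo in your hypothesis (``$\preceq (1+\epsilon)L_H$'' should read ``$\preceq (1+\epsilon)L_G$'').
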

\begin{proof}
We have
\[
  - \frac{\epsilon}{2} L_{G} 
\preceq
  L_{H} - \left(1+\frac{\epsilon}{2} \right) L_{G}
\preceq
  \frac{\epsilon}{2} L_{G},
\]
so
  we can write 
\[
L_{H} =  \left(1+\frac{\epsilon}{2} \right) L_{G} + M,
\]
  where $M$ is a matrix of norm at most 
 $(\epsilon /2) \norm{L_{G}} \leq n \epsilon /2$.
Let $x$ be the characteristic vector of $S$, and let
  $y$ be the characteristic vector of $T$.
We have
\[
  - w (S,T) = x^{T} L_{H} y.
\]
As $G$ is the complete graph and $S$ and $T$ are disjoint, we also know
\[
  x^{T} L_{G} y = - \sizeof{S} \sizeof{T}.
\]
Thus,
\begin{align*}
 x^{T} L_{H} y 
& = 
  \left(1+\frac{\epsilon}{2} \right) x^{T} L_{G} y
  + x^{T} M y\\
& = 
 - \left(1+\frac{\epsilon}{2} \right) \sizeof{S} \sizeof{T}
  + x^{T} M y.
\end{align*}
The lemma now follows by observing that
\[
 x^{T} M y
\leq 
 \norm{M} \norm{x} \norm{y}
\leq 
 n (\epsilon /2) \sqrt{\sizeof{S} \sizeof{T}}.
\]
\end{proof}

Using the proof of the lower bound on the spectral gap
  of Alon and Boppana (see~\cite{Nilli}) one can show that a $d$-regular
  unweighted graph cannot $\kappa$-approximate a complete graph for
  $\kappa$ asymptotically better than \eqref{eqn:ramanujanBound}.
We conjecture that this bound also holds for weighted graphs of average degree $d$.
Presently, we prove the following
  weaker result for such graphs.

\begin{proposition}\label{pro:lowerComplete}
Let $G$ be the complete graph on vertex set $V$,
  and let $H = (V,E,w)$ be a weighted graph with $n$ vertices and
  a vertex of degree $d$.
If $H$ $\kappa$-approximates $G$, then
\[
\kappa  
\geq 
1 + \frac{2}{\sqrt{d}} -
\bigO{\frac{\sqrt{d}}{n}}.
\]
\end{proposition}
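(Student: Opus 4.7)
\textbf{Proof proposal for Proposition \ref{pro:lowerComplete}.}

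The plan is to convert the proposition into an eigenvalue spread bound and then invoke Popoviciu's inequality on variance.  Since $L_G = nI$ on $\mathbf{1}^\perp$, the $\kappa$-approximation condition is exactly $\lambda_i(L_H) \in [n, \kappa n]$ for every nonzero eigenvalue $\lambda_i$.  If $\sigma^2$ denotes the variance of these $n-1$ eigenvalues, Popoviciu's inequality gives $\sigma \le (\kappa-1)n/2$, i.e.,
\[
  \kappa - 1 \;\ge\; \frac{2\sigma}{n}.
\]
Thus it suffices to prove $\sigma \ge n/\sqrt{d} - O(\sqrt{d})$.

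First I would extract the elementary consequence of $L_G \preceq L_H$: plugging the test vector $e_v - \mathbf{1}/n \in \mathbf{1}^\perp$ into $x^T L_G x \le x^T L_H x$ gives $w_v := \sum_{u\sim v} w_{uv} \ge n-1$ for \emph{every} vertex $v$.  Next, because our special vertex $v$ has only $d$ nonzero edge weights, Cauchy--Schwarz yields the key inequality
\[
  \sum_{u \sim v} w_{uv}^2 \;\ge\; \frac{w_v^2}{d} \;\ge\; \frac{(n-1)^2}{d}.
\]

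Now I would compute $\sigma^2$ through traces, using
\[
  (n-1)\sigma^2 \;=\; \mathrm{Tr}(L_H^2) - \frac{(\mathrm{Tr}\,L_H)^2}{n-1},
  \qquad \mathrm{Tr}\,L_H = \sum_v w_v, \qquad \mathrm{Tr}(L_H^2) = \sum_v w_v^2 + 2\sum_e w_e^2.
\]
Apply Cauchy--Schwarz at every vertex, $2\sum_{u\sim v} w_{uv}^2 \ge w_v^2/\deg(v)$, to the edge sum to get
\[
  \mathrm{Tr}(L_H^2) \;\ge\; \sum_v w_v^2 \Bigl(1 + \frac{1}{\deg(v)}\Bigr) \;\ge\; w_v^2 \frac{d+1}{d} + \sum_{u\ne v} w_u^2\Bigl(1 + \frac{1}{n-1}\Bigr),
\]
where only the vertex $v$ enjoys the improved factor $(d+1)/d$.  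Using $w_u \ge n-1$ everywhere and the bound $\mathrm{Tr}(L_H)\le \kappa n(n-1)$, straightforward algebra (together with $(\mathrm{Tr}\,L_H)^2/(n-1)$ being controlled by its extremal value $n^2(n-1)$ in the near-tight regime) reduces to the clean estimate
\[
  (n-1)\sigma^2 \;\ge\; \frac{n(n-1)(n-d-1)}{d} - O(n(n-1)),
\]
i.e., $\sigma^2 \ge n^2/d - O(n)$.  Taking square roots and substituting into Popoviciu yields
\[
  \kappa - 1 \;\ge\; \frac{2}{n}\sqrt{\frac{n(n-d-1)}{d}} \;=\; \frac{2}{\sqrt{d}}\sqrt{1 - \tfrac{d+1}{n}} \;\ge\; \frac{2}{\sqrt d} - O\!\Bigl(\frac{\sqrt d}{n}\Bigr),
\]
as desired.

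The main obstacle is Step 4, the actual lower bound on $\sigma^2$.  The simpler single-test-vector argument (applied to $e_v - \mathbf{1}/n$ or to its image under $L_H - nI$) only yields a Rayleigh quotient of $n(1 + 1/\sqrt d)$, losing a factor of $2$; the factor $2$ is precisely the Popoviciu constant relating the spread $\lambda_{\max}-\lambda_{\min}$ to the standard deviation of the spectrum, which is why the trace/variance framework is the natural route.  The bookkeeping required is delicate because $\mathrm{Tr}(L_H)$ itself can range over a $\kappa$-dependent interval, and one must show that the lower bound on $\mathrm{Tr}(L_H^2)$ dominates this slack with only an $O(n)$ loss.  Carrying this through with $H$ being a uniformly-weighted $d$-regular graph as the tightness example (where $\sigma^2 = n(n-d-1)/d$ exactly) is the guide.
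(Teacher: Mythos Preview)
Your central claim, that the trace inequalities force
\[
(n-1)\sigma^{2}\;\ge\;\frac{n(n-1)(n-d-1)}{d}-O(n(n-1)),
\]
is not correct, and this is a genuine gap rather than missing bookkeeping. From your own lower bound
\[
\mathrm{Tr}(L_{H}^{2})\;\ge\;w_{v}^{2}\,\frac{d+1}{d}+\sum_{u\ne v}w_{u}^{2}\,\frac{n}{n-1},
\]
subtracting $S^{2}/(n-1)$ with $S=\sum_{u}w_{u}$ and using $\tfrac{n}{n-1}\sum_{u}w_{u}^{2}\ge S^{2}/(n-1)$ (equality when all $w_{u}$ are equal) leaves only the single excess term $w_{v}^{2}\bigl(\tfrac{1}{d}-\tfrac{1}{n-1}\bigr)$. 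Since $w_{v}\ge n-1$, this yields
\[
(n-1)\sigma^{2}\;\ge\;\frac{(n-1)^{2}}{d}-(n-1),
\]
i.e.\ $\sigma^{2}\gtrsim n/d$, a full factor of $n$ short of what you need. Popoviciu then gives only $\kappa-1\gtrsim 2/\sqrt{nd}$, not $2/\sqrt{d}$. Your tightness guide, the uniformly weighted $d$-regular graph, has \emph{every} vertex of degree $d$, so each vertex contributes a $w_{u}^{2}/d$ boost; with only one low-degree vertex the boost is $n$ times smaller. More conceptually, a single degree-$d$ vertex is a bounded-rank perturbation and can move only $O(1)$ eigenvalues far from the bulk, so the variance (which averages over $n-1$ eigenvalues) need not be large even when the spread is.

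The paper avoids this altogether: it exhibits two explicit test vectors $x,y$ supported on $v_{0}$ and its $d$ neighbors (value $1$ at $v_{0}$ and $\pm 1/\sqrt{d}$ at the neighbors) and computes $y^{T}L_{H}y/x^{T}L_{H}x$ directly. This produces a ratio of two Rayleigh quotients and hence a lower bound on $\lambda_{\max}/\lambda_{\min}$ without any averaging over the spectrum, which is exactly what the variance route cannot deliver here. Your remark that a single test vector ``loses a factor of $2$'' misdiagnoses the situation: the paper uses a \emph{pair} of test vectors to hit both ends of the spectrum and recovers the full $2/\sqrt{d}$.
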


\begin{proof}
We use a standard approach.
Suppose $H$ is a $\kappa$-approximation of the complete graph. 
We will construct vectors $x^*$ and $y^*$ orthogonal to the $\bvec{1}$ vector so
  that 
  \[
    \frac{y^{*T}L_Hy^*}{x^{*T}L_Hx^*}
    \frac{\|x^*\|^2}{\|y^*\|^2}
  \]
  is large, and this will give us a lower bound on $\kappa$.

Let $v_{0}$ be the vertex of degree $d$, and let
  its neighbors be $v_{1}, \dotsc , v_{d}$. 
Suppose $v_i$ is connected to $v_0$ by an edge of weight $w_i$, and the total weight
  of the edges between $v_i$ and vertices other than $v_0,v_1,\ldots,v_d$ is
  $\delta_i$.
We begin by considering vectors $x$ and $y$ with
\begin{alignat*}{2}
  x (u) = \begin{cases}
1  & \text{for $u = v_{0}$},
\\
1/\sqrt{d}  & \text{for $u = v_{i}$, $i \geq 1$},
\\
0 & \text{for $u \not \in \setof{v_{0}, \dots , v_{d}}$}
\end{cases}
\end{alignat*}
\begin{alignat*}{2}
  y (u) = \begin{cases}
1  & \text{for $u = v_{0}$},
\\
- 1/\sqrt{d}  & \text{for $u = v_{i}$, $i \geq 1$},
\\
0 & \text{for $u \not \in \setof{v_{0}, \dots , v_{d}}$}
\end{cases}
\end{alignat*}
These vectors are not orthogonal to $\bvec{1}$, but we will take care of that
  later.
It is easy to compute the values taken by the quadratic form at $x$ and $y$:
\begin{align*}
  x^T L_H x &=
  \sum_{i=1}^d w_i(1-1/\sqrt{d})^2 + \sum_{i=1}^d\delta_i(1/\sqrt{d}-0)^2
  \\&= \sum_{i=1}^d w_i +\sum_{i=1}^d (\delta_i + w_{i})/d - 2\sum_{i=1}^d w_i/\sqrt{d}
\end{align*}
and 
\begin{align*}
  y^T L_H y &=
  \sum_{i=1}^d w_i(1+1/\sqrt{d})^2 + \sum_{i=1}^d\delta_i(-1/\sqrt{d}-0)^2
  \\&= \sum_{i=1}^d w_i +\sum_{i=1}^d (\delta_i + w_{i})/d + 2\sum_{i=1}^d w_i/\sqrt{d}.
\end{align*}
The ratio in question is thus
\begin{align*}
  \frac{
    y^TL_Hy
   }{
     x^TL_Hx
   } 
   & = \frac{
     \sum_i w_i + \sum_i (\delta_i + w_{i})/d + 2\sum_i w_i/\sqrt{d}
    }{
     \sum_i w_i + \sum_i (\delta_i + w_{i})/d - 2\sum_i w_i/\sqrt{d}
   }
   \\
   & = \frac{
     1+\frac{1}{\sqrt{d}}\frac{2\sum_i w_i}{\sum_i w_i + \sum_i (\delta_i + w_{i})/d}
    }{
     1-\frac{1}{\sqrt{d}}\frac{2\sum_i w_i}{\sum_i w_i + \sum_i (\delta_i + w_{i})/d}
    }.
\end{align*}
Since $H$ is a $\kappa$-approximation, all weighted degrees must lie between $n$
and $n\kappa$, which gives
\[
  \frac{
    2\sum_{i} w_{i}
  }{
    \sum_{i} w_{i} +\sum_i (\delta_i + w_{i})/d
  }
  = \frac{2}
    {1+\frac{
          \sum_i (\delta_i + w_{i})/d
	 }{
	  \sum_i w_i
	 }
    }
  \ge \frac{2}{1+\kappa}.
\]
Therefore,
\begin{equation}\label{eqn:testvectors} 
\frac{
  y^TL_Hy
}{
  x^TL_Hx
} \ge 
\frac{
  1+\frac{1}{\sqrt{d}}\frac{2}{1+\kappa}
  }{
  1-\frac{1}{\sqrt{d}}\frac{2}{1+\kappa}
}.
\end{equation}
Let $x^*$ and $y^*$ be the projections of $x$ and $y$ respectively orthogonal to
the $\bvec{1}$ vector. Then 
\[
  \|x^*\|^2 = \|x\|^2-\langle x,\bvec{1}/\sqrt{n}\rangle^2 =
  2-\frac{(1+\sqrt{d})^2}{n}
\]
and 
\[
  \|y^*\|^2 = \|y\|^2-\langle y,\bvec{1}/\sqrt{n}\rangle^2 =
  2-\frac{(1-\sqrt{d})^2}{n}
\]
so that as $n\to\infty$ 
\begin{equation}\label{eqn:projnorms}
\frac{\|x^*\|^2}{\|y^*\|^2} = 1 - \bigO{\frac{\sqrt{d}}{n}}.
\end{equation}
Combining (\ref{eqn:testvectors}) and (\ref{eqn:projnorms}), we conclude that asymptotically:
\[
  \frac{y^{*T}L_Hy^*}{x^{*T}L_Hx^*}
  \frac{\|x^*\|^2}{\|y^*\|^2}
  \ge 
\frac{
  1+\frac{1}{\sqrt{d}}\frac{2}{1+\kappa}
  }{
  1-\frac{1}{\sqrt{d}}\frac{2}{1+\kappa}.
}
\left( 1 - \bigO{\frac{\sqrt{d}}{n}} \right)
\]
But by our assumption the LHS is at most $\kappa$, so we have
\[
  \kappa \ge
\frac{
  1+\frac{1}{\sqrt{d}}\frac{2}{1+\kappa}
  }{
  1-\frac{1}{\sqrt{d}}\frac{2}{1+\kappa}.
}
\left( 1 - \bigO{\frac{\sqrt{d}}{n}} \right)
\]
which on rearranging gives
\[ 
  \kappa \ge 1+\frac{2}{\sqrt{d}} - \bigO{\frac{\sqrt{d}}{n}}
\]
as desired.
\end{proof}
\section{Conclusion}
We conclude by drawing a connection between Theorem \ref{thm:linalg} and an
  outstanding open problem in mathematics, the Kadison-Singer conjecture. 
This conjecture, which dates back to 1959, is equivalent to the well-known
  Paving Conjecture \cite{ak-and, cass} as well as to a stronger form of the restricted invertibility
  theorem of Bourgain and Tzafriri \cite{btz,cass}.
The following formulation is due to Nik Weaver \cite{weaver}.

\begin{conjecture} \label{conj:ks} There are universal constants 
$\epsilon > 0, \delta > 0$, and $r\in \mathbb{N}$ for which the following
statement holds. If $\ppi_1,\ldots, \ppi_m\in\R^n$
satisfy $\|\ppi_i\|\le \delta$ for all $i$ and
\[\sum_{i\le m}\ppi_i\ppi_i^T = I,\]
  then there is a partition $X_1,\ldots X_r$ of $\{1,\ldots, m\}$ for which
\[\left\|\sum_{i\in X_j}\ppi_i\ppi_i^T\right\|\le 1-\epsilon\]
  for every $j=1,\ldots, r$.
\end{conjecture}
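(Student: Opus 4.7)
The target is Weaver's discrepancy-theoretic form of Kadison-Singer, and the structural parallel with Theorem \ref{thm:linalg} is striking: in both settings we are given an isotropic decomposition $\sum_i \ppi_i \ppi_i^T = I$ and must select a sub-structure whose spectrum is tightly controlled. The plan is to try to push the barrier-function machinery of Section 3 into the partitioning setting. Specifically, I would process the vectors one at a time and maintain, for each candidate class $X_j$, a matrix $A_j^{(q)} = \sum_{i \in X_j \cap [q]} \ppi_i \ppi_i^T$ together with an upper-barrier potential $\Phi^{u}(A_j^{(q)})$; the goal is then to show that at every step there is some $j$ into which $\ppi_{q+1}$ can be placed without making $\Phi^{u}(A_j)$ exceed a fixed threshold less than $1/\epsilon$, which by the standard relation would give $\lmax(A_j) \le 1-\epsilon$.

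The key inequality I would aim for is a pigeonhole analogue of Lemma \ref{lem:shiftboth}: sum the upper contribution $U_{A_j}(\ppi_{q+1})$ over $j=1,\ldots,r$ and relate it to $\sum_j \tr(uI-A_j)^{-1}$ and $\sum_j \tr(uI-A_j)^{-2}$, quantities that the isotropy constraint $\sum_i \ppi_i\ppi_i^T = I$ should control. The hypothesis $\|\ppi_i\| \le \delta$ enters crucially to bound the Sherman-Morrison denominator so that the Lemma \ref{lem:upperupd} calculation (with $t=1$) remains valid, i.e.\ the jump in any single $\Phi^u(A_j)$ from absorbing one outer product is not catastrophic.

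The main obstacle is exactly that forced choice $t=1$. In Theorem \ref{thm:linalg} the freedom to pick an arbitrary positive scalar $t$ was essential: it allowed the upper and lower potentials to be simultaneously non-increasing. Here there is no such freedom, and worse, there is no lower barrier to exploit since the conjecture only cares about upper bounds on the $A_j$. The averaging step that drove the previous proofs is therefore in danger: the mean contribution $\tfrac{1}{r}\sum_i U_{A_j}(\ppi_i)$ might be dominated by a handful of vectors whose insertion into $A_j$ would blow the barrier, and unlike before one cannot dilute such a vector by shrinking its weight.

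A more ambitious plan would abandon one-pass greedy assignment and look for a global partition: choose each $\chi(i)\in\{1,\ldots,r\}$ at random and aim to show that with positive probability every $\bigl\|\sum_{i\in X_j}\ppi_i\ppi_i^T\bigr\| \le 1-\epsilon$. A matrix-Chernoff or Rudelson-type concentration bound, exploiting $\|\ppi_i\|\le\delta$ and the $r$-fold union over classes, would in principle suffice, but the known concentration bounds lose a $\log n$ factor and so fail to deliver a \emph{universal} $\epsilon$. Realistically I would expect either a refined barrier argument keeping track of all $r$ matrices simultaneously via a joint potential $\sum_j \Phi^u(A_j)$, or an entirely new technique (for instance an interlacing-polynomial argument on the characteristic polynomials of the $A_j$), to be needed. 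The suggestive analogy with Theorem \ref{thm:linalg} is precisely why the paper flags Kadison-Singer as a natural next target, but the barrier machinery as developed above does not, by itself, close the gap.
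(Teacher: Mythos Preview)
The statement is a \emph{conjecture} in the paper, not a theorem: the paper does not prove it and offers no proof to compare against. What the paper does, in the paragraph following Conjecture~\ref{conj:ks}, is observe that a hypothetical strengthening of Theorem~\ref{thm:linalg}---one in which the scalars $s_i$ are forced to be either $0$ or a single constant $\beta$---would immediately yield the conjecture with $r=2$. Your proposal is in complete agreement with this: you correctly identify that the essential obstacle is the ``forced choice $t=1$'' (equivalently, the loss of freedom to rescale each outer product), and you correctly conclude that the barrier machinery of Section~3, as it stands, does not close the gap. So your write-up is not a proof, but it is an accurate diagnosis, and it matches the paper's own assessment.

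It is worth noting that your closing speculation---that ``an interlacing-polynomial argument on the characteristic polynomials'' might be needed---turned out to be prescient: the conjecture was resolved several years after this paper by Marcus, Spielman, and Srivastava using exactly such a method. But within the scope of the present paper the statement is open, and your proposal should be read as commentary on an open problem rather than as a proof attempt.
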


Suppose we had a version of Theorem \ref{thm:linalg} which, assuming 
$\|\ppi_i\|\le\delta$, guaranteed that the
  scalars $s_i$ were all either $0$ or some constant $\beta > 0$, and 
  gave a constant approximation factor $\kappa < \beta$.
Then we would have
\[ I \preceq \beta \sum_{i\in S} \ppi_i\ppi_i^T\preceq \kappa\cdot I,\]
 for $S=\{i:s_i\ne 0\}$, yielding
a proof of Conjecture \ref{conj:ks} with 
    $r=2$ and $\epsilon = \min \{1-\frac{\kappa}{\beta}, \frac{1}{\beta}\}$ since
    \[ \left\|\sum_{i\in {S}} \ppi_i\ppi_i^T\right\| \le \frac{\kappa}{\beta} \le
    1-\epsilon\quad\textrm{and}\]
    \[ \left\|\sum_{i\in\overline{S}} \ppi_i\ppi_i^T\right\| =
    1-\lambda_\textrm{min}\left(\sum_{i\in S}\ppi_i\ppi_i^T\right) \le
    1-\frac{1}{\beta} \le 1-\epsilon.\]

As a special case, such a theorem would also imply the existence of {\em unweighted} sparsifiers
  for the complete graph and other (sufficiently dense) edge-transitive graphs. 
It is also worth noting that the $\|\ppi_i\|\le\delta$ condition when applied to vectors $\{\Pi_e\}_{e\in E}$ arising from a graph 
  simply means that the effective resistances
  of all edges are bounded; thus, we would be able to conclude that any graph with sufficiently small resistances
  can be split into two graphs that approximate it spectrally.

\end{document}